\documentclass[12pt,reqno]{amsart}
\usepackage{graphicx}
\usepackage{amssymb,amsmath}
\usepackage{amsthm}
\usepackage{color}
\usepackage[pdf]{pstricks}
\usepackage{hyperref}

\setlength{\textwidth}{15.5cm}
\setlength{\textheight}{20.5cm}
\addtolength{\oddsidemargin}{-1cm}
\addtolength{\evensidemargin}{-1cm}

\usepackage{amssymb}
\usepackage{epsfig}
\usepackage{extarrows}
\usepackage{amsfonts}
\usepackage{graphicx}

\newcommand{\R}{\mathbb{R}}

\newtheorem{remark}{Remark}
\newtheorem{proposition}{Proposition}
\newtheorem{theorem}{Theorem}

\begin{document}
	
\title[KdV breathers]{\bf KdV breathers on a cnoidal wave
  background}
	
\author{Mark Hoefer}
\address[M. Hoefer]{Department of Applied Mathematics,  University of Colorado at Bolder, USA}
\email{hoefer@colorado.edu}
	
\author{Ana Mucalica}
\address[A. Mucalica]{Department of Mathematics and Statistics, McMaster University,	Hamilton, Ontario, Canada, L8S 4K1}
\email{mucalica@mcmaster.ca} 
	
\author{Dmitry E. Pelinovsky}
\address[D. Pelinovsky]{Department of Mathematics and Statistics, McMaster University, Hamilton, Ontario, Canada, L8S 4K1}
\email{dmpeli@math.mcmaster.ca}

\begin{abstract}
  Using the Darboux transformation for the Korteweg-de Vries equation,
  we construct and analyze exact solutions describing the interaction
  of a solitary wave and a traveling cnoidal wave.  Due to their
  unsteady, wavepacket-like character, these wave patterns are
  referred to as breathers.  Both elevation (bright) and depression
  (dark) breather solutions are obtained.  The nonlinear dispersion
  relations demonstrate that the bright (dark) breathers propagate
  faster (slower) than the background cnoidal wave. Two-soliton
  solutions are obtained in the limit of degeneration of the cnoidal
  wave. In the small amplitude regime, the dark breathers are
  accurately approximated by dark soliton solutions of the nonlinear
  Schr\"odinger equation.  These results provide insight into recent
  experiments on soliton-dispersive shock wave interactions and
  soliton gases.
\end{abstract}

\date{\today}
\maketitle

\section{Introduction}

The localized and periodic traveling wave solutions of the Korteweg-de
Vries (KdV) equation are so ubiquitous and fundamental to nonlinear
science that their names, ``soliton'' and ``cnoidal wave,'' have
achieved a much broader usage, representing localized and periodically
extended traveling wave solutions across a wide range of nonlinear
evolutionary equations.  Consequently, it is natural and important to
consider their interactions.  While the traditional notion of linear
superposition cannot be used, the complete integrability of the KdV
equation implies a nonlinear superposition principle.  For example,
soliton interactions can be described by exact $N$-soliton solutions,
which can be constructed by successive Darboux transformations
\cite{Matveev}.  By utilizing solutions of the spectral problem for
the stationary Schr\"odinger equation and the temporal evolution equation whose
compatibility is equivalent to solving the KdV equation, the Darboux
transformation achieves a nonlinear superposition principle by
effectively ``adding'' one soliton to the base solution.  In the
spectral problem, the soliton appears as an additional eigenvalue that
is added to the spectrum of the base solution.  

Compared to soliton
interactions, soliton-cnoidal wave interactions have not been explored
in as much detail.  The purpose of this paper is to apply the Darboux
transformation to the cnoidal wave solution of the KdV equation in
order to obtain the nonlinear superposition of a single soliton and a
cnoidal wave.  These exact solutions, expressed in terms of Jacobi
theta functions and elliptic integrals, represent the interactions of
a soliton and a cnoidal wave.

The motivation for this study comes from recent experiments and
analysis of the interaction of solitons and dispersive shock waves
(DSWs) \cite{Maiden,Sprenger,Cole}. The DSWs can be viewed as
modulated cnoidal waves \cite{Flaschka,ElHoefer} so that soliton-DSW
interaction is analogous to soliton-cnoidal wave interaction. Two
different types of soliton-DSW interaction dynamics were observed in
\cite{Maiden}. When a soliton completely passes through a DSW, the
nature of the interaction gives rise to an elevation (bright)
nonlinear wavepacket.  When a soliton becomes embedded or trapped
within a DSW, the trapped soliton resembles a depression (dark)
nonlinear wavepacket.  Similar transmission and trapping scenarios
were analyzed for solitons interacting with rarefaction waves
\cite{AbCole,Muc}.

Breathers are localized, unsteady solutions that exhibit two distinct
time scales or velocities; one associated with propagation and the
other with internal oscillations. A canonical model equation that
admits breather solutions is the focusing modified Korteweg-de Vries
(mKdV) equation. These solutions can be interpreted as bound states of
two soliton solutions \cite{AblowitzSegur,Clarke}.  It is in a similar
spirit that we regard as a breather, the soliton-cnoidal wave
interactions considered here. Such wavepacket solutions are
propagating, nonlinear solutions with internal oscillations.

Among our main results, we find two distinct varieties of exact
solutions of the KdV equation, corresponding to elevation (bright) or
depression (dark) breathers interacting with the cnoidal wave
background. These breathers are topological because they impart a
phase shift to the cnoidal wave. We show that bright breathers
propagate faster than the cnoidal wave, whereas dark breathers move
slower. Furthermore, bright breathers of sufficiently small amplitude
exhibit a negative phase shift, whereas bright breathers of
sufficiently large amplitude exhibit a positive phase shift. On the
other hand, dark breathers with the strongest localization have a
positive phase shift.  Small amplitude dark breathers can exhibit
either a negative or positive phase shift. Each breather solution is
characterized by its position and a spectral parameter, determining a
nonlinear dispersion relation, which uniquely relates the breather
velocity to the breather phase shift.

Exact solutions representing soliton-cnoidal wave interactions have
previously been constructed using other solution methods. The first
result was developed in \cite{Kuznetsov} within the context of the
stability analysis of a cnoidal wave of the KdV equation. The authors
used the Marchenko equation of the inverse scattering transform and
obtained exact solutions for ``dislocations'' of the cnoidal
wave. More special solutions for soliton-cnoidal wave interactions
were obtained in \cite{Hu} by using the nonlocal symmetries of the KdV
equation.  These solutions are expressed in a closed form as integrals
of Jacobi elliptic functions, but they do not represent the most
general exact solutions for soliton-cnoidal wave interactions.

Quasi-periodic (finite-gap) solutions and solitons on a quasi-periodic
background have been obtained as exact solutions of the KdV equation
by using algebro-geometric methods \cite{Bobenko,Gesztesy}. In the
limit of a single gap, such solutions describe interactions of
solitons with a cnoidal wave.  By using the degeneration of
hyperelliptic curves and Sato Grassmannian theory, mixing between
solitons and quasi-periodic solutions was obtained recently in
\cite{Nakan1} based on \cite{Nakan2}, not only for the KdV equation
but also for the KP hierarchy of integrable equations.  Finally, in a
very recent preprint \cite{Bertola}, inspired by recent works on
soliton gases \cite{Girotti,CongyEl}, the degeneration of
quasi-periodic solutions was used to construct multisoliton-cnoidal
wave interaction solutions.

Compared to previous work, which primarily involve Weierstrass
functions with complex translation parameters, we give explicit
solutions in terms of Jacobi elliptic functions with real-valued
parameters. This approach allows us to clarify the nature of
soliton-cnoidal wave interactions, plot their corresponding
properties, and analyze the exact solutions in various limiting
regimes. We also demonstrate that the Darboux transformation provides
a more straightforward method for obtaining these complicated
interaction solutions compared to the degeneration methods used in \cite{Nakan1,Bertola}.

The paper is organized as follows. The main results are formulated in
Section \ref{sec-1} and illustrated graphically. In Section
\ref{sec-2}, we introduce the normalized cnoidal wave solution with
one parameter.  Symmetries of the KdV equation are then introduced
that can be used to generate the more general family of cnoidal waves
with four arbitrary parameters.  Eigenfunctions of the stationary
Schr\"{o}dinger equation with the normalized cnoidal wave potential
are reviewed in Section \ref{sec-3}. The time evolution of the
eigenfunctions is obtained in Section \ref{sec-4}. In Section
\ref{sec-5}, the Darboux transformation is used to generate breather
solutions to the KdV equation. Properties of bright and dark breathers
are explored in Sections \ref{sec-6} and \ref{sec-7},
respectively. The paper concludes with Section \ref{sec-8}.

\section{Main results}
\label{sec-1}

We take the Korteweg--de Vries (KdV) equation in the normalized form
\begin{equation}
  \label{kdv}
  u_t+6uu_x+u_{xxx}=0,
\end{equation}
where $t$ is the evolution time, $x$ is the spatial coordinate for
wave propagation, and $u$ is the fluid velocity. As is well-known
\cite{Gardner}, every smooth solution $u(x,t)$ of the KdV equation
(\ref{kdv}) is the compatibility condition of the stationary
Schr\"{o}dinger equation
\begin{equation}
\label{stat-Sch}
(-\partial_x^2 - u) v = \lambda v
\end{equation}
and the time evolution problem 
\begin{equation}
\label{time-evol}
v_t=  (4\lambda - 2u) v_x + u_x v,
\end{equation} 
where $\lambda$ is the $(x,t)$-independent spectral parameter.

The normalized traveling cnoidal wave of the KdV equation (\ref{kdv}) 
is given by 
\begin{equation}
\label{cnoidal-wave}
u(x,t) = \phi_0(x-c_0t), \qquad \phi_0(x) := 2k^2 {\rm cn}^2(x,k), 
\quad 
c_0 := 4(2k^2-1),
\end{equation}
where ${\rm cn}(x,k)$ is the Jacobi elliptic function, and $k \in (0,1)$ is the elliptic modulus. Table \ref{Table-1} collects together elliptic integrals and Jacobi elliptic functions used in our work, see \cite{Byrd,Grad,Lawden}.

The main result of this work is the derivation and analysis of two solution families of the KdV equation (\ref{kdv}) parametrized by $\lambda$ and $x_0 \in \mathbb{R}$, where $\lambda$ belongs to $(-\infty,-k^2)$ for the first family 
and $(1-2k^2,1-k^2)$ for the second family. Both the solution families can be expressed in the form 
\begin{align}
u(x,t) = 2 \left[ k^2 - 1 + \frac{E(k)}{K(k)} \right]  + 2
\partial_x^2 \log \tau(x,t), 
\label{new-solution}
\end{align}
where the $\tau$-function for the first family is given by 
\begin{align}
\label{tau-function}
\tau(x,t) := \Theta(x - c_0 t + \alpha_b) 
e^{\kappa_b (x-c_b t + x_0)} + 
\Theta(x - c_0 t - \alpha_b) 
e^{-\kappa_b (x-c_b t + x_0)}
\end{align}
with uniquely defined $\kappa_b > 0$, $c_b > c_0$ and
$\alpha_b \in (0,K(k))$ and the $\tau$-function for the second family
is given by
\begin{align}
\label{tau-function-dark}
\tau(x,t) := \Theta(x - c_0 t + \alpha_d) 
e^{-\kappa_d (x-c_d t + x_0)} + 
\Theta(x - c_0 t - \alpha_d) 
e^{\kappa_d (x-c_d t + x_0)}
\end{align}
with uniquely defined $\kappa_d > 0$, $c_d < c_0$, and
$\alpha_d \in (0,K(k))$.

Figure \ref{bright sol} depicts the spatiotemporal evolution of a
solution $u(x,t)$ given by (\ref{new-solution}) and
(\ref{tau-function}).  This solution represents a bright breather on a
cnoidal wave background (hereafter referred to as a bright breather)
with speed $c_b > c_0$ and inverse width $\kappa_b$, where $c_0$ is the
speed of the background cnoidal wave. As a result of the bright
soliton, the cnoidal wave background is spatially shifted by
$-2 \alpha_b$.

\begin{figure}[h!]
	\centering
	\includegraphics[width=7cm,height=6cm]{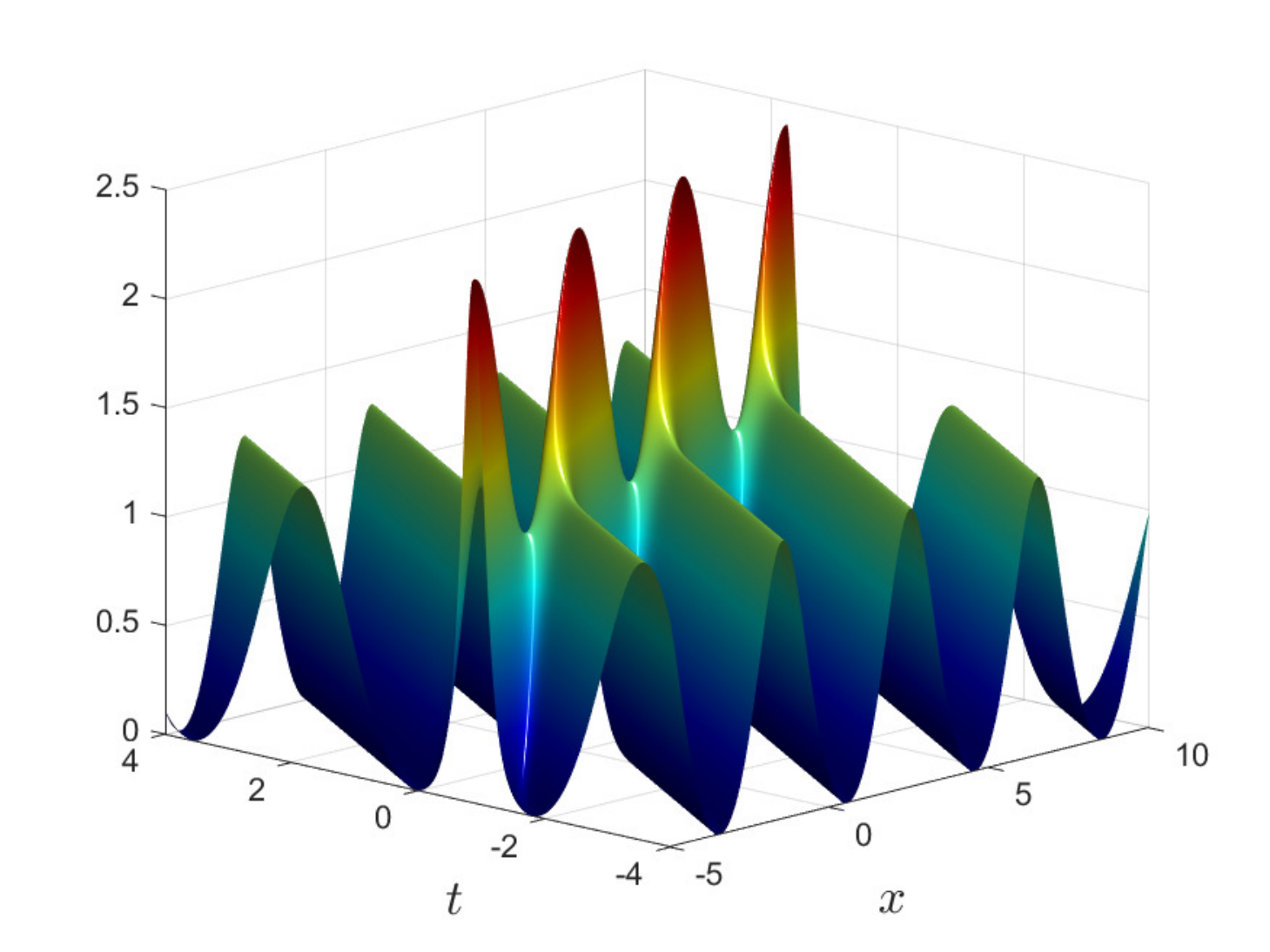}
	\includegraphics[width=7cm,height=6cm]{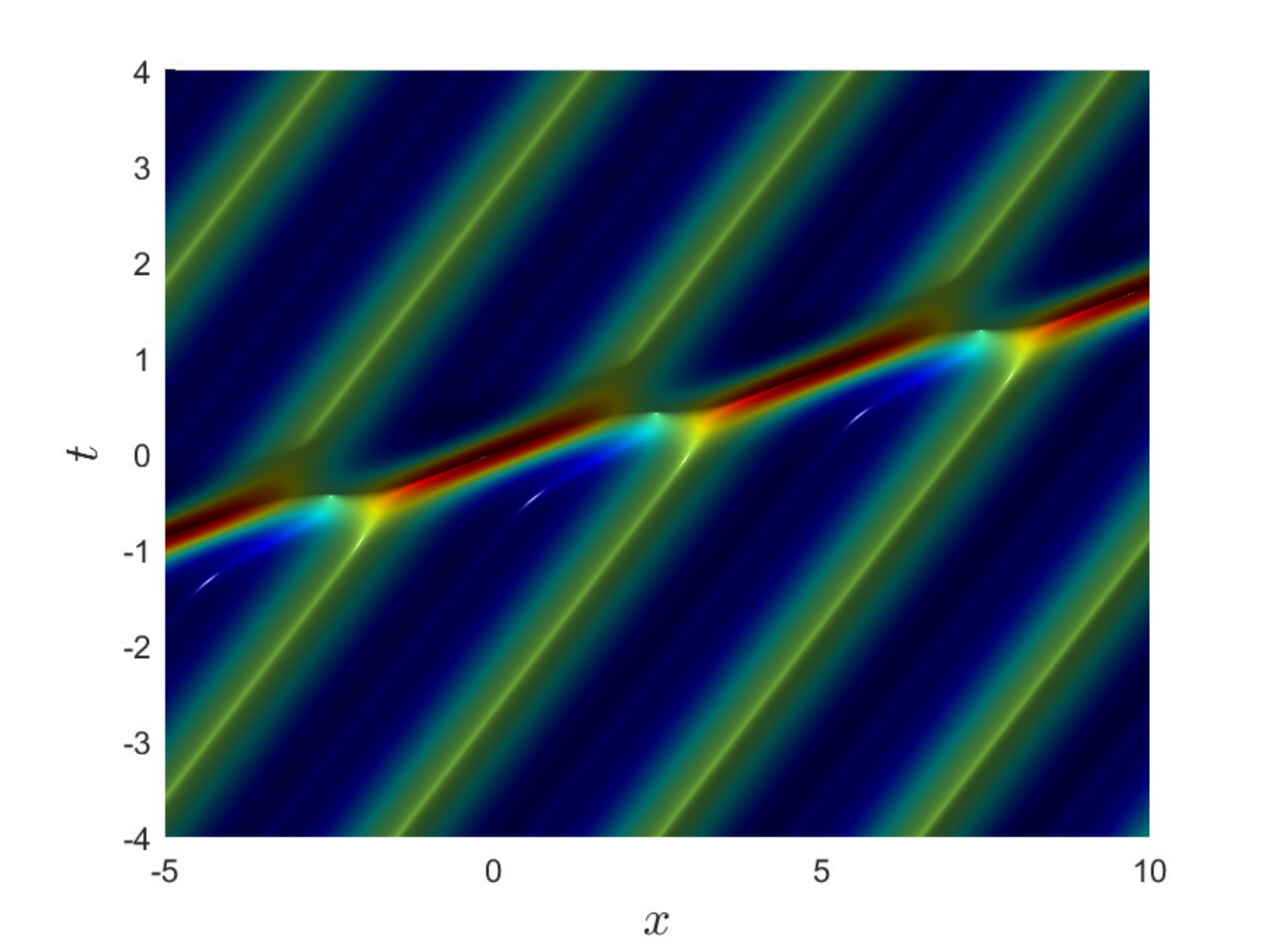}
	\caption{Bright breather on a cnoidal wave with $k=0.8$ 
		for $\lambda=-1.2$ and $x_0 = 0$.}
	\label{bright sol}
\end{figure}

\begin{figure}[h!]
	\centering
	\includegraphics[width=7cm,height=6cm]{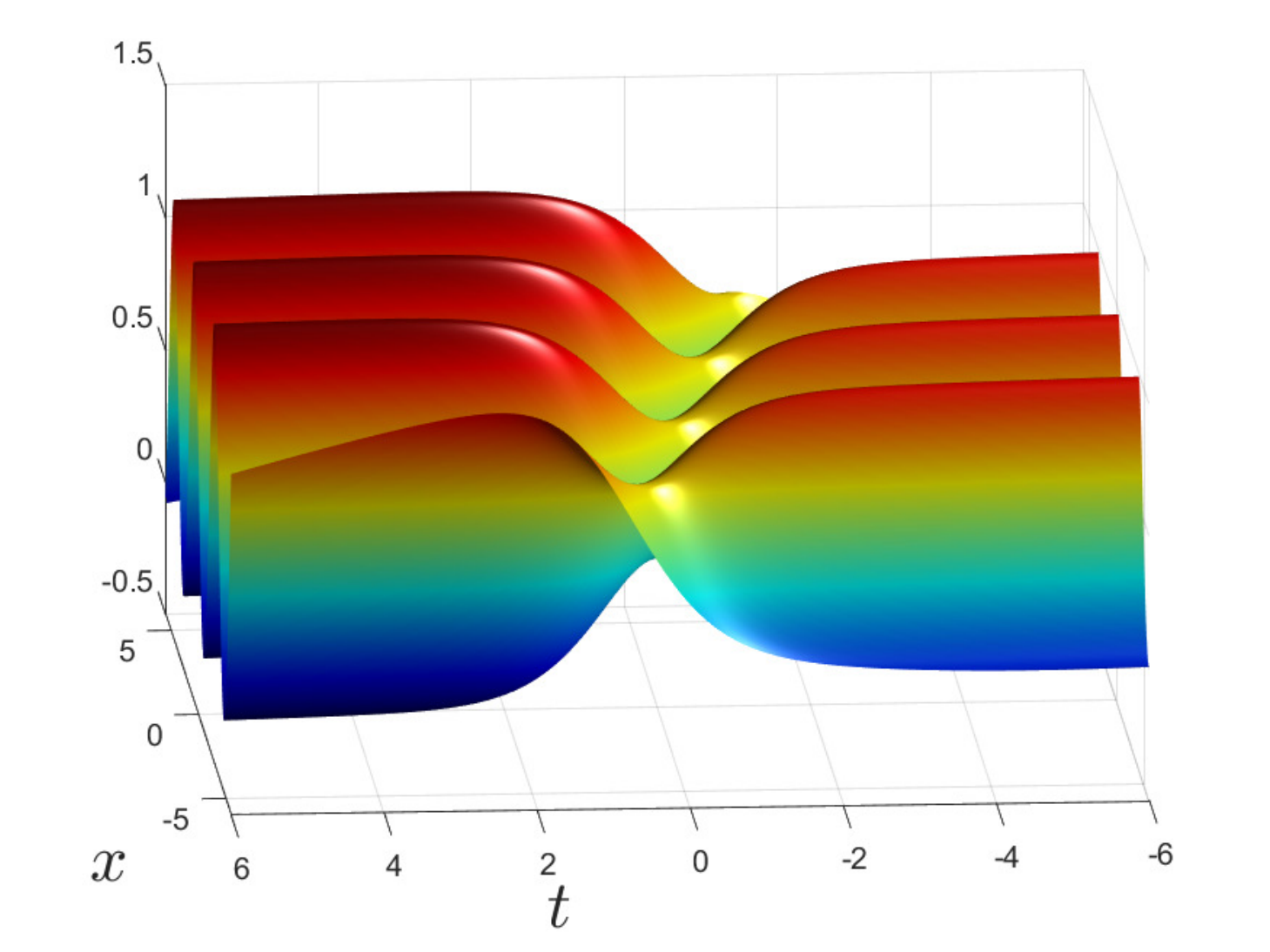}
	\includegraphics[width=7cm,height=6cm]{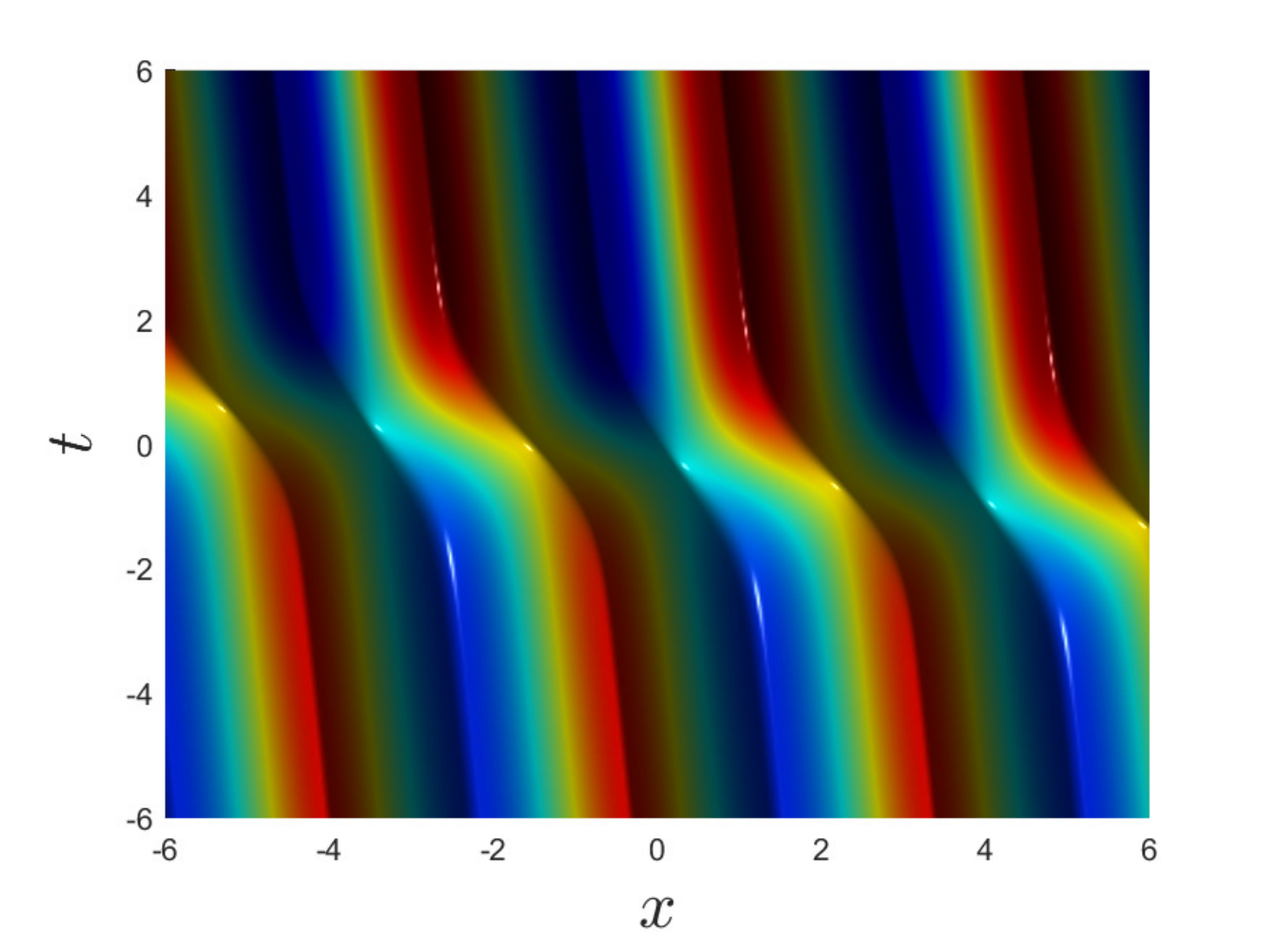}
	\caption{Dark breather on a cnoidal wave with $k=0.7$ for
      $\lambda=0.265$ and $x_0 = 0$.}
	\label{dark sol}
\end{figure}

Figure \ref{dark sol} shows the spatiotemporal evolution of a solution
$u(x,t)$ given by (\ref{new-solution}) and (\ref{tau-function-dark}).
This solution is a dark breather on a cnoidal wave background
(hereafter referred to as a dark breather), where the breather core
exhibits the inverse spatial width $\kappa_d$ and speed $c_d < c_0$.
The dark breather gives rise to the spatial shift $2 \alpha_d$ of the
cnoidal background.

Using properties of Jacobi elliptic functions, we obtain explicit
expressions for the parameters of the $\tau$-functions
(\ref{tau-function}) and (\ref{tau-function-dark}) and their
dependence on the parameter $\lambda$ that characterizes the dynamical
properties of bright and dark breathers. Although the analytical
expressions (\ref{new-solution}) with either (\ref{tau-function}) or
(\ref{tau-function-dark}) are not novel and can be found in equivalent
forms in \cite{Kuznetsov,Nakan1,Bertola}, it is the first time to the
best of our knowledge that the dynamical properties of bright and dark
breathers have been explicitly investigated for the KdV equation (\ref{kdv}). We also obtain asymptotic expressions for bright and dark breathers in the limits
when $\lambda$ approaches the band edges or when the elliptic modulus
$k$ approaches the end points $0$ and $1$.

\section{Traveling cnoidal wave}
\label{sec-2}

A traveling wave solution $u(x,t) = \phi(x-ct)$ to the KdV equation (\ref{kdv}) satisfies the second-order differential equation after integration in $x$:
\begin{equation}
  \label{trav-wave}
  \phi'' + 3 \phi^2 - c \phi = b,
\end{equation}
where $b \in \mathbb{R}$ is the integration constant and the single variable $x$ stands for $x - ct$. The second-order equation (\ref{trav-wave}) is integrable with the first-order invariant 
\begin{equation}
\label{first-order}
(\phi')^2 + 2 \phi^3 - c \phi^2 - 2 b \phi = d,
\end{equation}
where $d \in \mathbb{R}$  is another integration constant. The following proposition summarizes the existence of periodic solutions to system (\ref{trav-wave}) and (\ref{first-order}). 

\begin{proposition}
	\label{prop-periodic}
	There exists a family of periodic solutions to system (\ref{trav-wave}) and (\ref{first-order}) for every $(b,c,d)$ satisfying 
	$c^2 + 12 b > 0$ and $d \in (U(\phi_+),U(\phi_-))$, where 
	$U(\phi) := 2 \phi^3 - c \phi^2 - 2 b \phi$ and $\phi_{\pm}$ are critical points of $U$ given by $\phi_{\pm} = (c \pm \sqrt{c^2 + 12 b})/6$. 
\end{proposition}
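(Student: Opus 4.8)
The plan is to treat the second-order ODE (\ref{trav-wave}) as a one-dimensional conservative Newtonian system and read off the periodic orbits from the phase portrait determined by the first integral (\ref{first-order}). First I would rewrite (\ref{trav-wave}) as $\phi'' = -\tfrac{1}{2}U'(\phi)$, where $U(\phi) = 2\phi^3 - c\phi^2 - 2b\phi$; multiplying by $\phi'$ and integrating then recovers exactly the invariant (\ref{first-order}) in the form
\begin{equation}
(\phi')^2 + U(\phi) = d.
\end{equation}
Thus solutions correspond to level sets of the energy $(\phi')^2 + U(\phi)$ in the phase plane $(\phi,\phi')$, with $d$ playing the role of the total energy, and periodicity of $\phi(x)$ becomes periodicity of a bounded orbit.

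Next I would pin down the local structure of the equilibria. Since $U'(\phi) = 2(3\phi^2 - c\phi - b)$, the hypothesis $c^2 + 12b > 0$ guarantees exactly two distinct real critical points $\phi_- < \phi_+$, and a direct computation gives $U''(\phi_\pm) = \pm 2\sqrt{c^2 + 12b}$. Hence $\phi_+$ is a strict local minimum of $U$ and $\phi_-$ a strict local maximum, so in the Newtonian system the point $(\phi_+,0)$ is a nondegenerate center and $(\phi_-,0)$ is a saddle. This is the standard configuration from which a family of periodic orbits emanates.

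I would then analyze the allowed region $\{U(\phi)\le d\}$ for $d$ strictly between the two critical values. Because $U$ is a cubic with positive leading coefficient and $d \in (U(\phi_+),U(\phi_-))$, the equation $U(\phi)=d$ has three simple real roots $\phi_1 < \phi_2 < \phi_3$ interlacing the critical points as $\phi_1 < \phi_- < \phi_2 < \phi_+ < \phi_3$, so that $\{U\le d\}$ has a compact component $[\phi_2,\phi_3]$ enclosing the center $\phi_+$. On this component $d-U(\phi)>0$ in the interior and vanishes simply at the endpoints, so the level curve $(\phi')^2 = d - U(\phi)$ is a smooth closed curve carrying no equilibrium; every trajectory on it is periodic, oscillating between the turning points $\phi_2$ and $\phi_3$ with period
\begin{equation}
T(d) = 2\int_{\phi_2}^{\phi_3} \frac{d\phi}{\sqrt{d - U(\phi)}}.
\end{equation}
Since $\phi_2,\phi_3$ are simple zeros of $d-U$, the integrand has only integrable inverse-square-root singularities, so $T(d)<\infty$. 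Letting $d$ range over $(U(\phi_+),U(\phi_-))$ and composing with an arbitrary translation in $x$ yields the asserted family of periodic solutions.

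The main obstacle is the third step: one must verify that the bounded component stays strictly between the saddle and the center, so that the orbit is neither a homoclinic loop (infinite period, $d=U(\phi_-)$) nor the degenerate equilibrium ($d=U(\phi_+)$). Both are ensured by the \emph{strict} inequalities in $d\in(U(\phi_+),U(\phi_-))$, which keep the turning points $\phi_2,\phi_3$ simple. The one place where the cubic structure of $U$ must be used carefully is the root-counting at intermediate energies together with the interlacing $\phi_1<\phi_-<\phi_2<\phi_+<\phi_3$; once this is established, finiteness of $T(d)$ and smoothness of the closed orbit follow immediately.
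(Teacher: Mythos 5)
Your proposal is correct and follows essentially the same route as the paper: both recast (\ref{trav-wave})--(\ref{first-order}) as a conservative phase-plane system with potential $U$, classify $(\phi_+,0)$ as a center and $(\phi_-,0)$ as a saddle via $U''(\phi_\pm) = \pm 2\sqrt{c^2+12b}$, and locate the periodic orbits at energy levels $d \in (U(\phi_+),U(\phi_-))$. The only difference is one of detail: where the paper invokes the qualitative picture of periodic orbits filling the punctured neighbourhood of the center enclosed by the homoclinic loop, you verify this explicitly through the root interlacing $\phi_1 < \phi_- < \phi_2 < \phi_+ < \phi_3$ and the finiteness of the period integral, which makes your write-up somewhat more self-contained but does not change the argument.
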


\begin{proof}
If $c^2 + 12 b > 0$, the mapping $\phi \mapsto U(\phi)$ has two critical points 
$\phi_{\pm}$. Since $U'(\phi_{\pm}) = 6 \phi_{\pm}^2 - 2 c \phi_{\pm} - 2b = 0$ 
and  $U''(\phi_{\pm}) = 12 \phi_{\pm} - 2c = \pm 2 \sqrt{c^2 + 12b}$, 
$\phi_+$ is the minimum of $U$ and $\phi_-$ is the maximum of $U$. 
If $d = U(\phi_+)$, the only bounded solution of system (\ref{trav-wave}) and (\ref{first-order}) is a constant solution corresponding to the center point 
$(\phi_+,0)$. If $d = U(\phi_-)$, the only bounded solution of system (\ref{trav-wave}) and (\ref{first-order}) is a homoclinic orbit from the saddle point $(\phi_-,0)$ which surrounds the center point $(\phi_+,0)$. The family of periodic orbits exists in a punctured neighbourhood around the
center point enclosed by the homoclinic orbit, for $d \in (U(\phi_+),U(\phi_-))$. 

If $c^2 + 12 b \leq 0$, the mapping  $\phi \mapsto U(\phi)$ is monotonically increasing. There exist no bounded solutions of system (\ref{trav-wave}) and (\ref{first-order}) with the exception of the constant solution $\phi = c/6$ in the marginal case $c^2 + 12 b = 0$.
\end{proof}

It follows from Proposition \ref{prop-periodic} that the most general
periodic traveling wave solution has three parameters $(b,c,d)$, up to
translations, that are defined in a subset of $\mathbb{R}^3$ for which
$c^2 + 12b > 0$ and $d \in (U(\phi_+),U(\phi_-))$. For each $(b,c,d)$
in this subset of $\mathbb{R}^3$, the translational parameter
$x_0 \in \mathbb{R}$ generates the family of solutions $\phi(x + x_0)$
due to translation symmetry.

Two of the three parameters of the periodic solution family can be
chosen arbitrarily due to the following two symmetries of the KdV
equation (\ref{kdv}):
\begin{itemize}
\item Scaling transformation: if $u(x,t)$ is a solution, so is
  $\alpha^2 u(\alpha x, \alpha^3 t)$, $\alpha > 0$.
\item Galilean transformation: if $u(x,t)$ is a solution, so is
  $\beta + u(x-6\beta t,t)$, $\beta \in \mathbb{R}$.
\end{itemize}
Due to these symmetries, if $\phi_0$ is a periodic solution 
to system (\ref{trav-wave}) and (\ref{first-order}) with $(b,c,d) = (b_0,c_0,d_0)$, then $\beta + \alpha^2 \phi_0(\alpha x)$ 
is also a periodic solution to system (\ref{trav-wave}) and (\ref{first-order}) 
with 
$$
(b,c,d) = (-3 \beta^2 -\alpha^2 \beta c_0 + \alpha^4 b_0,6\beta + \alpha^2 c_0,2 \beta^3 + \alpha^2 \beta^2 c_0 - 2 \beta \alpha^4 b_0 + \alpha^6 d_0),
$$
where $\alpha > 0$ and $\beta \in \mathbb{R}$ are arbitrary
parameters.  Thus, without loss of generality, we can consider the
normalized, 1-parameter family of periodic traveling waves
$\phi_0(x) = 2 k^2 {\rm cn}^2(x,k)$ for which the values of
$(b_0,c_0,d_0)$ are determined in the following proposition.

\begin{proposition}
	\label{prop-normalized} The normalized cnoidal wave $\phi_0(x) = 2 k^2 {\rm cn}^2(x,k)$ is a periodic solution of system (\ref{trav-wave}) and (\ref{first-order}) with
	$$
	b_0 := 4k^2(1-k^2), \quad 
	c_0 := 4(2k^2-1), \quad 
	d_0 = 0,
	$$
	where $k \in (0,1)$ is an arbitrary parameter.
\end{proposition}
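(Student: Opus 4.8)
The plan is to verify the statement by direct substitution, treating it as an identity check. First I would record the standard differentiation rules $\frac{d}{dx}{\rm cn} = -{\rm sn}\,{\rm dn}$, $\frac{d}{dx}{\rm sn} = {\rm cn}\,{\rm dn}$, $\frac{d}{dx}{\rm dn} = -k^2{\rm sn}\,{\rm cn}$, together with the quadratic identities ${\rm sn}^2+{\rm cn}^2=1$ and ${\rm dn}^2+k^2{\rm sn}^2=1$. The key simplifying move is to introduce $w := {\rm cn}^2(x,k)$, so that $\phi_0 = 2k^2 w$, ${\rm sn}^2 = 1-w$, and ${\rm dn}^2 = (1-k^2)+k^2 w$. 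This rewrites every quantity appearing in the invariant as a polynomial in the single variable $w$, which is the feature that makes the verification mechanical.

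Differentiating gives $\phi_0' = -4k^2\,{\rm cn}\,{\rm sn}\,{\rm dn}$, and hence $(\phi_0')^2 = 16k^4\, w(1-w)[(1-k^2)+k^2 w]$, a cubic in $w$. I would then substitute $(\phi_0')^2$, $\phi_0^3$, $\phi_0^2$, and $\phi_0$ into the left-hand side of (\ref{first-order}) with as-yet-undetermined constants $b$ and $c$, and collect powers of $w$. I expect the $w^3$ coefficient to cancel identically (the $-16k^6 w^3$ from $(\phi_0')^2$ against the $+16k^6 w^3$ from $2\phi_0^3$). Requiring the coefficients of $w^2$ and of $w$ to vanish then produces two linear conditions that force $c = 4(2k^2-1) = c_0$ and $b = 4k^2(1-k^2) = b_0$; with these values the remaining constant term is zero, giving $d = 0 = d_0$.

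It then remains to confirm (\ref{trav-wave}). Rather than recompute $\phi_0''$ from scratch, I would observe that differentiating (\ref{first-order}) in $x$ and dividing by $2\phi_0'$ reproduces (\ref{trav-wave}) with the \emph{same} constants $b$ and $c$ at every point where $\phi_0' \neq 0$; since $\phi_0$ is smooth, (\ref{trav-wave}) then holds everywhere by continuity. (Equivalently, one may directly verify $\phi_0'' = 4k^2(1-k^2) - 4k^2(2-4k^2)w - 12k^4 w^2$ and match coefficients again.) Periodicity is immediate, since ${\rm cn}(x,k)$ has real period $4K(k)$ and therefore ${\rm cn}^2(x,k)$ is $2K(k)$-periodic.

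There is no genuine conceptual obstacle here: the statement is a verification, so the only real work is the algebraic bookkeeping involved in reducing all products of Jacobi functions to a single polynomial in $w$ and matching coefficients. The one point deserving care is not to regard the first-order invariant alone as sufficient: one must ensure that the prescribed $(b_0,c_0,d_0)$ are consistent with \emph{both} (\ref{trav-wave}) and (\ref{first-order}), which is exactly why it is reassuring that the former follows from the latter by differentiation with no change in the integration constants.
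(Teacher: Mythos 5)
Your proof is correct and follows essentially the same route as the paper: direct substitution of $\phi_0$ into the first-order invariant (\ref{first-order}) using the fundamental relations and derivative rules for Jacobi elliptic functions. The only cosmetic differences are that the paper first fixes $d_0 = U(0) = 0$ by observing $\min_{x \in \mathbb{R}} \phi_0(x) = 0$ and then matches the remaining coefficients, whereas you determine all three constants simultaneously by collecting powers of $w = {\rm cn}^2(x,k)$, and you additionally make explicit that (\ref{trav-wave}) follows from (\ref{first-order}) by differentiation --- a consistency point the paper leaves implicit.
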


\begin{proof}
  Since $\min\limits_{x \in \mathbb{R}} \phi_0(x) = 0$, it follows
  from \eqref{first-order} that $d_0 = U(0) = 0$. On the other hand,
  by using the following fundamental relations between Jacobi elliptic
  functions
  \begin{equation}
    \label{fund-relation}
    {\rm sn}^2(x,k) + {\rm cn}^2(x,k) = 1, \quad 
    {\rm dn}^2(x,k) + k^2 {\rm sn}^2(x,k) = 1
  \end{equation}
  and their derivatives
  \begin{equation}
    \label{derivative-relation}
    \frac{d}{dx} \left[ \begin{array}{l} {\rm sn}(x,k) \\
                          {\rm cn}(x,k) \\
                          {\rm dn}(x,k) \end{array} \right] = 
                      \left[ \begin{array}{l} {\rm cn}(x,k) {\rm dn}(x,k) \\
                               - {\rm sn}(x,k) {\rm dn}(x,k) \\ -k^2
                               {\rm sn}(x,k) {\rm cn}(x,k) \end{array}
                           \right], 
   \end{equation}
   we obtain from (\ref{first-order}) with $d_0 = 0$ that
   $b_0 = 4k^2(1-k^2)$ and $c_0 = 4(2k^2-1)$.
\end{proof}

\section{Lam\'{e} equation as the spectral problem}
\label{sec-3}

The spectral problem (\ref{stat-Sch}) with the normalized cnoidal wave
(\ref{cnoidal-wave}) is known as the Lam\'{e} equation
\cite[p.395]{Ince}.  It can be written in the form
\begin{equation}
\label{Lame}
v''(x) - 2 k^2 {\rm sn}^2(x,k) v(x) + \eta v(x) = 0, \quad \eta := \lambda + 2k^2,
\end{equation} 
where the single variable $x$ stands for $x - c_0t$. 
By using (\ref{fund-relation}) and (\ref{derivative-relation}), 
we obtain the following three particular solutions $v = v_{1,2,3}(x)$ of the Lam\'{e} equation (\ref{Lame}) with $\lambda = \lambda_{1,2,3}(k)$:
\begin{align*}
&\lambda_1(k) := -k^2, \qquad & v_1(x) := {\rm dn}(x,k), \\
&\lambda_2(k) := 1-2k^2, \qquad & v_2(x) := {\rm cn}(x,k), \\
&\lambda_3(k) := 1-k^2, \qquad & v_3(x) := {\rm sn}(x,k), 
\end{align*}
which correspond to the three remarkable values of $\eta$:
$\eta_1 = k^2$, $\eta_2 = 1$, and $\eta_3 = 1+k^2$. For $k \in (0,1)$, the three eigenvalues are sorted as $\lambda_1(k) < \lambda_2(k) < \lambda_3(k)$.

\begin{figure}[h]
	\centering
	\includegraphics[width=7cm,height=6cm]{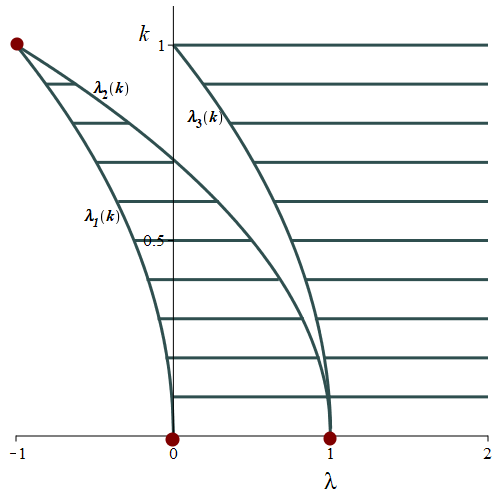}
	\caption{Floquet spectrum of the Lam\'{e} equation (\ref{Lame}) 
		for different values of $k \in (0,1)$.}
	\label{band structure}
\end{figure}

Figure \ref{band structure} shows the Floquet spectrum of the Lam\'{e}
equation (\ref{Lame}), which corresponds to the admissible values of
$\lambda$ for which $v \in L^{\infty}(\mathbb{R})$. The bands are
shaded and the band edges shown by the bold solid curves corresponding
to $\lambda = \lambda_{1,2,3}(k)$ for $k \in (0,1)$.  The cnoidal wave
is the periodic potential with a single finite gap (the so-called {\em
  one-zone} potential) \cite{Oblak} so that the Floquet spectrum
consists of the single finite band $[\lambda_1(k),\lambda_2(k)]$ and
the semi-infinite band $[\lambda_3(k),\infty)$.

As is well-known (see \cite[p.~395]{Ince}), the two linearly
independent solutions of the Lam\'{e} equation (\ref{Lame}) for
$\lambda \neq \lambda_{1,2,3}(k)$ are given by the functions
\begin{equation}
  \label{two-solutions}
  v_{\pm}(x) = \frac{H(x \pm \alpha)}{\Theta(x)} e^{\mp x Z(\alpha)}, 
\end{equation}
where $\alpha \in \mathbb{C}$ is found from $\lambda \in \mathbb{R}$
by using the characteristic equation
$\eta = k^2 + {\rm dn}^2(\alpha,k)$ and the Jacobi zeta function is
$Z(\alpha) := \frac{\Theta'(\alpha)}{\Theta(\alpha)} =
Z(\varphi_\alpha,k)$ with $\varphi_\alpha = \mathrm{am}(\alpha,k)$
\cite[144.01]{Byrd}, see Table \ref{Table-1}. Since
$\eta = \lambda + 2k^2$, the characteristic equation can be written in
the form
\begin{equation}
\label{charact-eq}
\lambda = 1 - 2k^2 + k^2 {\rm cn}^2(\alpha,k).
\end{equation}

\vspace{0.1cm}

The following proposition clarifies how $\alpha$ is defined from the characteristic equation (\ref{charact-eq}) when $\lambda$ is decreased from $\lambda_3(k)$ to $-\infty$. Figure \ref{path alpha} illustrates the path of $\alpha$ in the complex plane.

\begin{figure}[h!]
	\centering
	\includegraphics[width=5cm,height=5cm]{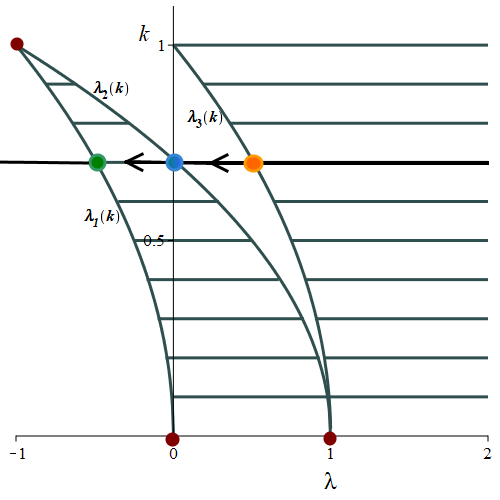}
	\includegraphics[width=8cm,height=5cm]{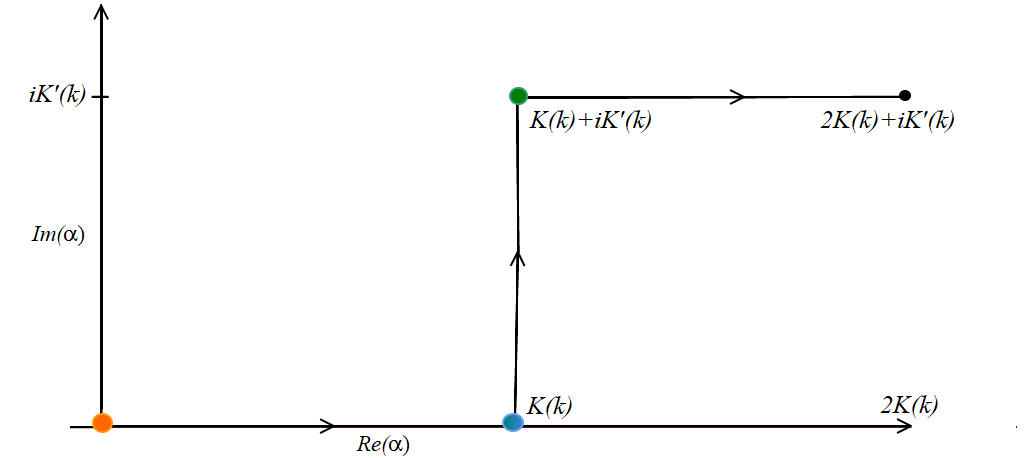}
	\caption{Left: Floquet spectrum with orange, blue, and green dots
      corresponding to $\lambda_3(k)$, $\lambda_2(k)$, and
      $\lambda_1(k)$, respectively, for a fixed value of
      $k \in (0,1)$. Right: The complex plane for the parameter
      $\alpha$ indicating the path of $\alpha$ corresponding to the
      path of $\lambda$ in \eqref{charact-eq}.}
	\label{path alpha}
\end{figure}

\begin{proposition}
	\label{prop-path}
	Fix $k \in (0,1)$. We have 
	\begin{itemize}
		\item $\alpha = F(\varphi_{\alpha},k) \in [0,K(k)]$ for $\lambda \in [\lambda_2(k),\lambda_3(k)]$, where $\varphi_{\alpha} \in [0,\frac{\pi}{2}]$ is given by 
		  \begin{equation}
		\label{char-eq}
\sin \varphi_\alpha = \frac{\sqrt{1-k^2 - \lambda}}{k}.
		\end{equation}
		\item $\alpha = K(k) + i \beta$ with $\beta = F(\varphi_{\beta},k') \in [0,K'(k)]$ for $\lambda \in [\lambda_1(k),\lambda_2(k)]$, where 
		$\varphi_{\beta} \in [0,\frac{\pi}{2}]$ is given by 
		  \begin{equation}
		  \label{eq-beta}
\sin \varphi_\beta =
		\frac{\sqrt{1-2k^2-\lambda}}{\sqrt{(1-k^2)(1-k^2-\lambda)}}.
		\end{equation}
	\item $\alpha = K(k) + i K'(k) + \gamma$ with $\gamma = F(\varphi_{\gamma},k) \in [0,K(k))$ for $\lambda \in (-\infty,\lambda_1(k)]$, where $\varphi_{\gamma} \in [0,\frac{\pi}{2})$ is given by 
        \begin{equation}
\label{eq:2}
\sin \varphi_\gamma =
\frac{\sqrt{-k^2-\lambda}}{\sqrt{1-2k^2-\lambda}},
\end{equation}
	\end{itemize}
where $k' = \sqrt{1-k^2}$ and $K'(k) = K(k')$.
\end{proposition}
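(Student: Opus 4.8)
The plan is to invert the characteristic equation (\ref{charact-eq}), namely $\lambda = 1 - 2k^2 + k^2 {\rm cn}^2(\alpha,k)$, by tracking $\alpha$ along the boundary of the half-period rectangle with vertices $0$, $K(k)$, $K(k)+iK'(k)$, $iK'(k)$ as $\lambda$ decreases from $\lambda_3(k)$ to $-\infty$. The key structural observation is that the three segments in the proposition are exactly the loci on which ${\rm cn}^2(\alpha,k)$ --- and hence $\lambda$ --- remains real: along the real segment $\alpha \in [0,K(k)]$ this is immediate, while along the two complex segments the quarter-period translation formulas will show that ${\rm cn}(\alpha,k)$ is purely imaginary, so ${\rm cn}^2(\alpha,k)$ is real and negative. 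On each segment I would solve (\ref{charact-eq}) for the appropriate Jacobi function, convert to $\sin\varphi$ via the identity $\sin\varphi = {\rm sn}(F(\varphi,\cdot),\cdot)$, and then check that the endpoints match the band edges $\lambda_{1,2,3}(k)$, thereby piecing the three segments into one continuous path.

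For the first segment ($\lambda \in [\lambda_2(k),\lambda_3(k)]$, $\alpha$ real) I would solve (\ref{charact-eq}) directly as ${\rm cn}^2(\alpha,k) = (\lambda - 1 + 2k^2)/k^2$, whence the Pythagorean relation (\ref{fund-relation}) gives ${\rm sn}^2(\alpha,k) = (1-k^2-\lambda)/k^2$, which is (\ref{char-eq}). Evaluating at the endpoints confirms $\alpha = 0$ at $\lambda = \lambda_3(k) = 1-k^2$ and $\alpha = K(k)$ at $\lambda = \lambda_2(k) = 1 - 2k^2$, using ${\rm cn}(0,k)=1$ and ${\rm cn}(K(k),k)=0$.

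For the second segment I would set $\alpha = K(k) + i\beta$ and combine the quarter-period shift ${\rm cn}(u+K,k) = -k'\,{\rm sn}(u,k)/{\rm dn}(u,k)$ with Jacobi's imaginary transformation (which expresses ${\rm sn}(i\beta,k)$ and ${\rm dn}(i\beta,k)$ through functions of the complementary modulus $k'$) to obtain ${\rm cn}(K(k)+i\beta,k) = -ik'\,{\rm sn}(\beta,k')/{\rm dn}(\beta,k')$. Squaring, substituting into (\ref{charact-eq}), and solving the resulting linear equation for ${\rm sn}^2(\beta,k')$ --- using ${\rm dn}^2(\beta,k') = 1 - {k'}^2{\rm sn}^2(\beta,k')$ and ${k'}^2 = 1-k^2$ --- yields (\ref{eq-beta}). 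The analogous computation for the third segment uses $\alpha = K(k)+iK'(k)+\gamma$ and the half-period shift ${\rm cn}(u+K+iK',k) = -ik'/(k\,{\rm cn}(u,k))$, giving ${\rm cn}^2(\alpha,k) = -(1-k^2)/(k^2{\rm cn}^2(\gamma,k))$; solving (\ref{charact-eq}) for ${\rm cn}^2(\gamma,k)$ and then for ${\rm sn}^2(\gamma,k)$ produces (\ref{eq:2}). In both cases I would verify continuity at the band edges: $\beta = 0$ (so $\alpha = K(k)$) at $\lambda = \lambda_2(k)$ and $\beta = K'(k)$ (so $\alpha = K(k)+iK'(k)$) at $\lambda = \lambda_1(k) = -k^2$, matching the start of the third segment; and $\gamma = 0$ at $\lambda = \lambda_1(k)$, with $\gamma \to K(k)$ and $\varphi_\gamma \to \pi/2$ from below as $\lambda \to -\infty$, which accounts for the half-open intervals $[0,K(k))$ and $[0,\pi/2)$.

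The computations on each segment are routine once the correct translation formulas are in hand, so I expect the main obstacle to be conceptual rather than technical: one must recognize the correct polygonal path in the complex $\alpha$-plane and justify that it is forced by the requirement that $\lambda$ be real. A secondary point to verify is uniqueness, i.e. that the right-hand sides of (\ref{char-eq}), (\ref{eq-beta}), (\ref{eq:2}) are each monotone in $\lambda$ and sweep out $[0,1]$ (respectively $[0,1)$) as $\lambda$ traverses the stated interval, so that $\varphi_\alpha$, $\varphi_\beta$, $\varphi_\gamma$ --- and therefore $\alpha$ --- are uniquely determined. This amounts to checking that each radicand is nonnegative and that each expression is increasing in $-\lambda$ on its interval, which follows from writing the fractions in (\ref{eq-beta}) and (\ref{eq:2}) as $1$ minus a decreasing positive term and from the ordering $\lambda_1(k) < \lambda_2(k) < \lambda_3(k)$.
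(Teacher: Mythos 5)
Your proposal is correct and follows essentially the same route as the paper's proof: segment by segment, apply the quarter-period and half-period translation formulas (equivalently, the identities ${\rm cn}(K(k)+i\beta,k) = -ik'\,{\rm sn}(\beta,k')/{\rm dn}(\beta,k')$ and ${\rm cn}(K(k)+iK'(k)+\gamma,k) = -ik'/(k\,{\rm cn}(\gamma,k))$ cited from Gradshteyn--Ryzhik), solve the characteristic equation for the relevant squared Jacobi function, convert to $\sin\varphi$, and use monotonicity in $\lambda$ to fix the range of $\alpha$, $\beta$, $\gamma$. The only cosmetic differences are that you derive the translation identities from Jacobi's imaginary transformation rather than citing them, and you make explicit the endpoint matching and the ``reality of ${\rm cn}^2(\alpha,k)$ forces the path'' framing, which the paper leaves implicit.
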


\begin{proof}
  When $\lambda \in [\lambda_2(k),\lambda_3(k)]$, it follows from
  (\ref{charact-eq}) that ${\rm cn}^2(\alpha,k) \in [0,1]$ and hence
  $\alpha \in [0,K(k)] \; {\rm mod}\, K(k)$.  Solving
  (\ref{charact-eq}) for  $\sin \varphi_{\alpha} = {\rm sn}(\alpha,k)$ using \eqref{fund-relation} yields
  (\ref{char-eq}). As $\lambda$ is decreased from $\lambda_3(k)$ to
  $\lambda_2(k)$, $\varphi_\alpha$ is monotonically increasing from
  $0$ to $\pi/2$ and so $\alpha = F(\varphi_\alpha,k)$ monotonically
  increases from $0$ to $K(k)$. See the orange and blue dots in Figure
  \ref{path alpha}.

  When $\lambda \in [\lambda_1(k),\lambda_2(k)]$, we use the special
  relations (see \cite[8.151 and 8.153]{Grad}),
  $$
  {\rm cn}(K(k) + i \beta,k) = -k' \frac{{\rm sn}(i \beta,k)}{{\rm
      dn}(i\beta,k)} = -i k' \frac{{\rm sn}(\beta,k')}{{\rm
      dn}(\beta,k')},
  $$
  where $k' := \sqrt{1-k^2}$. The characteristic equation
  (\ref{charact-eq}) is rewritten in the form
  $$
  {\rm sn}^2(\beta,k') = \frac{1 - 2k^2 - \lambda}{(1-k^2)
    (1-k^2-\lambda)},
  $$
  from which it follows that ${\rm sn}^2(\beta,k') \in [0,1]$ and
  hence $\beta \in [0,K(k')] \; {\rm mod} \, K(k')$.  Setting
  $\sin \varphi_{\beta} = {\rm sn}(\beta,k')$ yields
  (\ref{eq-beta}). When $\lambda$ is decreased from $\lambda_2(k)$ to
  $\lambda_1(k)$, then $\varphi_\beta$ is monotone increasing and so
  is $F(\varphi_{\beta},k')$. Hence, $\beta$ increases from $0$ to
  $K'(k)$. See blue and green dots on Figure \ref{path alpha}.

  When $\lambda \in (-\infty,\lambda_1(k)]$, we use the special
  relations (see \cite[8.151]{Grad}),
  $$
  {\rm cn}(K(k) + i K'(k) + \gamma) = -\frac{i k'}{k {\rm
      cn}(\gamma,k)},
  $$
  and rewrite the characteristic equation (\ref{charact-eq}) in the
  form
  \begin{equation*}
    {\rm cn}^2(\gamma,k) = \frac{1-k^2}{1 - 2k^2 - \lambda}.
  \end{equation*}
  from which it follows that ${\rm cn}^2(\gamma,k) \in [0,1]$ and
  hence $\gamma \in [0,K(k)) \; {\rm mod} \, K(k)$. Setting
  $\sin \varphi_{\gamma} = {\rm sn}(\gamma,k)$ and using \eqref{fund-relation} yield
  (\ref{eq:2}). When $\lambda$ is decreased from $\lambda_1(k)$ to
  $-\infty$, then $\varphi_{\gamma}$ is monotone increasing and so is
  $F(\varphi_{\gamma},k)$. Hence, $\gamma$ increases from $0$ to
  $K(k)$. See the green and black dots in Figure \ref{path alpha}.
\end{proof}

\section{Time evolution of the eigenfunctions}
\label{sec-4}

Let $u(x,t) = \phi_0(x-c_0t)$ be the normalized cnoidal wave
(\ref{cnoidal-wave}) and $v(x,t) = v_{\pm}(x,t)$ be solutions of the
system (\ref{stat-Sch}) and (\ref{time-evol}) such that
$v_{\pm}(x,0) = v_{\pm}(x)$ is given by (\ref{two-solutions}).  The
time dependence of $v_{\pm}(x,t)$ can be found by separation of
variables:
\begin{equation}
  \label{eigenfunctions}
  v_{\pm}(x,t) = \frac{H(x - c_0t \pm \alpha)}{\Theta(x-c_0t)} e^{\mp
    (x - c_0t) Z(\alpha) \mp t \omega(\alpha)}, 
\end{equation}
where $\omega(\alpha)$ is to be found. After substituting (\ref{eigenfunctions}) into (\ref{time-evol}) and dividing by $v_{\pm}(x,t)$, we obtain
\begin{equation}
\label{time}
\omega(\alpha) = (c_0 + 4 \lambda - 2 \phi_0(x)) \left[ Z(\alpha) \pm Z(x) 
\mp \frac{H'(x \pm \alpha)}{H(x \pm \alpha)} \right] \mp \phi_0'(x),
\end{equation}
where $x$ stands again for $x-ct$. Equation (\ref{time}) holds for
every $x \in \mathbb{R}$ due to the compatibility of the system
(\ref{stat-Sch}) and (\ref{time-evol}). Hence, we obtain
$\omega(\alpha)$ by substituting $c_0 = 4(2k^2-1)$ and evaluating
(\ref{time}) at $x = 0$:
\begin{equation}
  \label{omega}
  \omega(\alpha) = 4(\lambda + k^2 - 1) \left[
    \frac{\Theta'(\alpha)}{\Theta(\alpha)} 
    - \frac{H'(\alpha)}{H(\alpha)} \right],
\end{equation}
where we have used the parity properties \cite[8.192]{Grad}:
$$
H(-x) = -H(x) \qquad \mbox{\rm and} \qquad \Theta(-x) = \Theta(x).
$$
The following proposition ensures that $\omega(\alpha)$ is real when $\lambda$ is taken either in the semi-infinite gap $(-\infty,\lambda_1(k))$ or in the finite gap $(\lambda_2(k),\lambda_3(k))$. 

\begin{proposition}
	\label{prop-omega}
	Fix $k \in (0,1)$. Then, $\omega(\alpha) \in \mathbb{R}$ if $\lambda \in (-\infty,\lambda_1(k)) \cup (\lambda_2(k),\lambda_3(k))$ and $\omega(\alpha) \in i \mathbb{R}$ if $\lambda \in [\lambda_1(k),\lambda_2(k)]$.
\end{proposition}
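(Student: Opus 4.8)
The plan is to reduce the two-term bracket in \eqref{omega} to a single Jacobi elliptic expression whose reality is transparent, and then read off the answer from the location of $\alpha$ in the complex plane. First I would use the classical identity ${\rm sn}(x,k) = c\, H(x)/\Theta(x)$, valid up to a nonzero constant $c$, which upon logarithmic differentiation together with the derivative relations \eqref{derivative-relation} gives
\begin{equation*}
\frac{\Theta'(\alpha)}{\Theta(\alpha)} - \frac{H'(\alpha)}{H(\alpha)} = -\frac{{\rm cn}(\alpha,k)\,{\rm dn}(\alpha,k)}{{\rm sn}(\alpha,k)}.
\end{equation*}
Combined with the characteristic equation \eqref{charact-eq}, which yields $\lambda + k^2 - 1 = -k^2\,{\rm sn}^2(\alpha,k)$, formula \eqref{omega} collapses to
\begin{equation*}
\omega(\alpha) = 4 k^2\, {\rm sn}(\alpha,k)\,{\rm cn}(\alpha,k)\,{\rm dn}(\alpha,k).
\end{equation*}
This is the key step: once $\omega$ is written as this product, its reality properties follow directly from Proposition \ref{prop-path}.

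Next I would run the case analysis dictated by Proposition \ref{prop-path}. For $\lambda \in (\lambda_2(k),\lambda_3(k))$ the parameter $\alpha = F(\varphi_\alpha,k)$ is real, so all three factors ${\rm sn},{\rm cn},{\rm dn}$ are real and hence $\omega(\alpha) \in \mathbb{R}$. For $\lambda \in (-\infty,\lambda_1(k))$ I would write $\alpha = K(k) + iK'(k) + \gamma$ with $\gamma$ real and apply the quarter-period shift formulas \cite[8.151]{Grad}, which give ${\rm sn}(\alpha,k)$ real and both ${\rm cn}(\alpha,k)$ and ${\rm dn}(\alpha,k)$ purely imaginary; the product of one real and two imaginary factors is real, so again $\omega(\alpha) \in \mathbb{R}$.

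Finally, for $\lambda \in [\lambda_1(k),\lambda_2(k)]$ I would set $\alpha = K(k) + i\beta$ and combine the shift-by-$K(k)$ formulas with Jacobi's imaginary transformation (as already used in the proof of Proposition \ref{prop-path}) to find that ${\rm sn}(\alpha,k)$ and ${\rm dn}(\alpha,k)$ are real while ${\rm cn}(\alpha,k)$ is purely imaginary; the product is then imaginary, giving $\omega(\alpha) \in i\mathbb{R}$.

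The main obstacle is purely bookkeeping: correctly tracking the real versus imaginary character of each of the three elliptic factors through the quarter-period shifts and the imaginary transformation, and verifying consistency at the band edges $\lambda_1,\lambda_2,\lambda_3$, where one of the factors vanishes so that $\omega = 0$ lies in $\mathbb{R}\cap i\mathbb{R}$ and the three cases match up continuously.
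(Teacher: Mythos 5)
Your proof is correct, but it takes a genuinely different route from the paper's. Your key step --- collapsing \eqref{omega} to the closed form $\omega(\alpha) = 4k^2\, {\rm sn}(\alpha,k)\,{\rm cn}(\alpha,k)\,{\rm dn}(\alpha,k)$ via the classical identity ${\rm sn}(x,k) = k^{-1/2} H(x)/\Theta(x)$ and the relation $\lambda + k^2 - 1 = -k^2 {\rm sn}^2(\alpha,k)$ from \eqref{charact-eq} --- is sound; in fact the paper uses the equivalent identities $H'/H = {\rm cn}\,{\rm dn}/{\rm sn} + Z$ and $\Theta'/\Theta = Z$ from \cite[1053.02]{Byrd}, but only later, in the proofs of Theorems \ref{th-bright} and \ref{th-dark}, not here. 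Your case analysis is also right: by \cite[8.151]{Grad}, for $\alpha = K(k)+iK'(k)+\gamma$ one has ${\rm sn}(\alpha,k)$ real while ${\rm cn}(\alpha,k)$ and ${\rm dn}(\alpha,k)$ are purely imaginary, so the triple product is real, and for $\alpha = K(k)+i\beta$ one has ${\rm sn}(\alpha,k), {\rm dn}(\alpha,k)$ real and ${\rm cn}(\alpha,k)$ purely imaginary, so the product is purely imaginary. As a consistency check, in the first case your formula evaluates to $4(1-k^2)\,{\rm sn}(\gamma,k)\,{\rm dn}(\gamma,k)/{\rm cn}^3(\gamma,k)$, which agrees with the paper's \eqref{omega-gamma} after applying the Byrd identities. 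The paper instead argues entirely at the level of theta functions: it keeps both logarithmic derivatives in \eqref{omega}, applies the half-period translations \cite[8.183]{Grad}, and reads off reality from the series expansions \cite[8.199(3)]{Grad}, showing in the finite-gap case that both logarithmic derivatives become purely imaginary and in the semi-infinite-gap case that their imaginary parts cancel in the difference. What the paper's route buys is the intermediate expression \eqref{omega-gamma} in terms of $H_1, \Theta_1$, which is reused verbatim to compute $c_b$ in the proof of Theorem \ref{th-bright}; what your route buys is a shorter, more elementary argument and an explicit elliptic-function formula for $\omega$, which also makes the behavior at the band edges immediate, since exactly one of ${\rm sn}, {\rm cn}, {\rm dn}$ vanishes at $\lambda_3(k), \lambda_2(k), \lambda_1(k)$ respectively, so $\omega = 0 \in \mathbb{R} \cap i\mathbb{R}$ there and the three cases glue continuously.
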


\begin{proof}
	We recall the logarithmic derivatives of the Jacobi theta functions \cite[8.199(3)]{Grad}:
\begin{align*}
\frac{H'(x)}{H(x)} &= \frac{\pi}{2 K(k)} \left[ \cot\left(\frac{\pi x}{2 K(k)}\right) + 4 \sin\left(\frac{\pi x}{K(k)}\right) \sum_{n=1}^{\infty} \frac{q^{2n}}{1 - 2 q^{2n} \cos\left(\frac{\pi x}{K(k)}\right) + q^{4n}} \right], \\
\frac{H_1'(x)}{H_1(x)} &= -\frac{\pi}{2 K(k)} \left[  \tan\left(\frac{\pi x}{2 K(k)}\right) + 4 \sin\left(\frac{\pi x}{K(k)}\right) \sum_{n=1}^{\infty} \frac{q^{2n}}{1 + 2 q^{2n} \cos\left(\frac{\pi x}{K(k)}\right) + q^{4n}} \right], \\
\frac{\Theta_1'(x)}{\Theta_1(x)} &= -\frac{2\pi}{K(k)} \sin\left(\frac{\pi x}{K(k)}\right) \sum_{n=1}^{\infty} \frac{q^{2n-1}}{1 + 2 q^{2n} \cos\left(\frac{\pi x}{K(k)}\right) + q^{4n-2}}, \\
\frac{\Theta'(x)}{\Theta(x)} &= \frac{2\pi}{K(k)} \sin\left(\frac{\pi x}{K(k)}\right) \sum_{n=1}^{\infty} \frac{q^{2n-1}}{1 - 2 q^{2n} \cos\left(\frac{\pi x}{K(k)}\right) + q^{4n-2}},
\end{align*}	
where $q := e^{-\frac{\pi K'(k)}{K(k)}}$ is the Jacobi nome, see Table \ref{Table-1}.

If $\lambda \in [\lambda_2(k),\lambda_3(k)]$, then $\alpha = F(\varphi_{\alpha},k) \in [0,K(k)]$ by Proposition \ref{prop-path} and  (\ref{omega}) returns real $\omega(\alpha)$, where both logarithmic 
derivatives of the Jacobi theta functions are positive.

If $\lambda \in [\lambda_1(k),\lambda_2(k)]$, then $\alpha = K(k) + i \beta$ with $\beta = F(\varphi_{\beta},k') \in [0,K'(k)]$ by Proposition \ref{prop-path}. The half-period translations \cite[8.183]{Grad} yield
\begin{align*}
H(K(k) + i \beta) &= H_1(i\beta), \\
\Theta(K(k)+i\beta) &= \Theta_1(i\beta), 
\end{align*}
so that the logarthmic derivatives in (\ref{omega}) are purely imaginary and  $\omega(K(k)+i\beta) \in i \mathbb{R}$.

If  $\lambda \in (-\infty,\lambda_1(k)]$, then 
$\alpha = K(k) + i K'(k) + \gamma$ with $\gamma = F(\varphi_{\gamma},k) \in [0,K(k))$ by Proposition \ref{prop-path}. The half-period translations \cite[8.183]{Grad} yield
\begin{align*}
H(K(k) + i K'(k) + \gamma) &= e^{\frac{\pi K'(k)}{4 K(k)}} e^{-\frac{i \pi \gamma}{2 K(k)}} \Theta_1(\gamma), \\
\Theta(K(k) + i K'(k) + \gamma) &= e^{\frac{\pi K'(k)}{4 K(k)}} e^{-\frac{i \pi \gamma}{2 K(k)}} H_1(\gamma),
\end{align*}
The purely imaginary part of the logarithmic derivatives cancels in
(\ref{omega}) after the transformation and we obtain the real quantity
\begin{equation}
\label{omega-gamma}
\omega(K(k) + i K'(k) + \gamma) = 4(\lambda + k^2 - 1) \left[ \frac{H_1'(\gamma)}{H_1(\gamma)}
- \frac{\Theta_1'(\gamma)}{\Theta_1(\gamma)} \right],
\end{equation}
where both logarithmic derivatives are negative.
\end{proof}

\section{New solutions via the Darboux transformation}
\label{sec-5}

We use the standard tool of the one-fold Darboux transformation for
the KdV equation \cite{Matveev}.  If we fix a value of
$\lambda = \lambda_0$ and obtain a solution $v = v_0(x,t)$ of the
linear equations (\ref{stat-Sch}) and (\ref{time-evol}) associated
with the potential $u = \phi_0(x-c_0t)$ of the KdV equation
(\ref{kdv}), a new solution of the same KdV equation (\ref{kdv}) is
given by
\begin{equation}
\label{hat-u}
\hat{u}(x,t) = \phi_0(x-c_0t) + 2 \partial_x^2 \log v_0(x,t).
\end{equation}
The new solution $\hat{u}(x,t)$ is real and non-singular if and only
if $v_0(x,t) \ne 0$ everywhere in the $(x,t)$ plane. This is true for
$\lambda_0 \in (-\infty,\lambda_1(k))$, which is below the Floquet
spectrum (Figure \ref{band structure}), because Sturm's nodal theorem
implies that $v_{\pm}(x,t),$ given by (\ref{eigenfunctions}), are
sign-definite in $x$ for every $t \in \mathbb{R}$. However, if
$\lambda_0 \in (\lambda_2(k),\lambda_3(k))$ is in the finite gap,
Sturm's nodal theorem implies that $v_{\pm}(x,t)$ have exactly one
zero on the fundamental period of $\phi_0$ for every
$t \in \mathbb{R}$. We will show that this technical obstacle can be
overcome with the translation of the new solution $\hat{u}(x,t)$ with
respect to a half-period in the complex plane of $x$.

The following proposition gives an important relation between the Jacobi cnoidal function and the Jacobi theta function. 

\begin{proposition}
	\label{prop-relation}
	For every $k \in (0,1)$, we have 
	\begin{equation}
	\label{relation}
	k^2 {\rm cn}^2(x,k) = k^2 - 1 + \frac{E(k)}{K(k)} + \partial_x^2
	\log \Theta(x). 
	\end{equation}
\end{proposition}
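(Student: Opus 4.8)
The plan is to identify $\partial_x \log \Theta(x)$ with the Jacobi zeta function and then differentiate once more using the standard derivative formula for that function. Recall from the discussion preceding (\ref{charact-eq}) that the Jacobi zeta function is defined precisely as the logarithmic derivative of the theta function, $Z(x) = \Theta'(x)/\Theta(x) = \partial_x \log \Theta(x)$. Hence $\partial_x^2 \log \Theta(x) = Z'(x)$, and the entire claim reduces to establishing the single identity $Z'(x) = {\rm dn}^2(x,k) - E(k)/K(k)$ together with the elementary Pythagorean relation between ${\rm dn}^2$ and ${\rm cn}^2$.

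First I would invoke the classical formula for the derivative of the Jacobi zeta function, $Z'(x) = {\rm dn}^2(x,k) - E(k)/K(k)$, which is tabulated in \cite{Byrd,Grad} (see Table \ref{Table-1}). The presence of the constant $E(k)/K(k)$ is natural: since $Z(x)$ has period $2K(k)$, its derivative must have zero mean over a period, and $E(k)/K(k)$ is exactly the mean value of ${\rm dn}^2(\cdot,k)$ on $[0,2K(k)]$ (because $\int_0^{2K(k)} {\rm dn}^2(t,k)\,dt = 2E(k)$). Second, I would rewrite ${\rm dn}^2$ in terms of ${\rm cn}^2$ using the fundamental relations (\ref{fund-relation}): from ${\rm dn}^2 + k^2 {\rm sn}^2 = 1$ and ${\rm sn}^2 + {\rm cn}^2 = 1$ one obtains ${\rm dn}^2(x,k) = 1 - k^2 + k^2 {\rm cn}^2(x,k)$.

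Combining the two steps yields
\[
\partial_x^2 \log \Theta(x) = Z'(x) = 1 - k^2 + k^2 {\rm cn}^2(x,k) - \frac{E(k)}{K(k)},
\]
and rearranging this expression gives (\ref{relation}) immediately.

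I do not expect any genuine obstacle once the derivative formula for $Z$ is in hand; the only point requiring care is the correct normalization of the additive constant $E(k)/K(k)$, which is exactly what guarantees that both sides of (\ref{relation}) are $2K(k)$-periodic with matching mean value. If one preferred a derivation avoiding a table lookup, an alternative would be to verify that both sides are even and $2K(k)$-periodic with the same analytic structure in the complex $x$-plane (the theta function $\Theta$ being entire and nonvanishing on the real axis) and then match them via their $q$-series expansions; however, the zeta-function route described above is the most economical and is the one I would present.
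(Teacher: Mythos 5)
Your proof is correct, but it takes a genuinely different route from the paper. You stay entirely within the Jacobi zeta/theta framework: identifying $\partial_x \log \Theta(x)$ with $Z(x)$ (the same identity the paper cites from \cite[144.01]{Byrd} in Section \ref{sec-3}), invoking the classical derivative formula $Z'(x) = {\rm dn}^2(x,k) - E(k)/K(k)$, and finishing with the Pythagorean relations (\ref{fund-relation}). The paper instead passes through the Weierstrass function: it uses the representation $P\bigl(x/\sqrt{e_1-e_3}\bigr) = c_1 - \partial_z^2 \log \theta_1$ from \cite[6.6.9]{Lawden}, converts $P$ to $k^2 {\rm cn}^2$ via the imaginary half-period shift ${\rm sn}(x+iK'(k),k) = {\rm sn}(x,k)/k$, uses the half-period translation $\theta_1 \to \theta_4 = \Theta$ to remove the shift, and then pins down the additive constant by evaluating at $x=0$ with the tabulated value of $\Theta''(0)/\Theta(0)$ from \cite[8.196]{Grad}. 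Your argument is shorter and arguably more transparent --- the constant $E(k)/K(k)$ emerges naturally as the mean value of ${\rm dn}^2$ over a period, consistent with $Z'$ having zero mean --- and it reuses machinery already introduced in the paper. What the paper's longer route buys is an explicit bridge to the Weierstrass-function formulation common in the prior literature on finite-gap solutions, and a rehearsal of the half-period translation identities that are then used repeatedly in the proofs of Theorems \ref{th-bright} and \ref{th-dark}. Both proofs ultimately rest on tabulated identities; yours on two facts about the zeta function, the paper's on the $P$--$\theta_1$ relation and the value of $\Theta''(0)/\Theta(0)$.
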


\begin{proof}
	From \cite[6.6.9]{Lawden} we have that
\begin{equation}
  \label{Byrd-relation}
  P\left(\frac{x}{\sqrt{e_1-e_3}} \right) = c_1 - \partial_z^2 \log \theta_1 \left(\frac{\pi x}{2
      K(k)}\right), 
\end{equation}
where $c_1$ is a specific constant to be determined and $P(z)$ is
Weierstrass' elliptic function that satisfies
\begin{equation*}
[P'(z)]^2 = 4 (P(z)-e_1) (P(z)-e_2) (P(z)-e_3),
\end{equation*}
with three turning points $e_3 < e_2 < e_1$ such that
$e_1 + e_2 + e_3 = 0$.  As is well known (see \cite[8.169]{Grad}),
$P(z)$ is related to the Jacobi elliptic functions by
\begin{align*}
  P\left(\frac{x}{\sqrt{e_1-e_3}} \right) &= e_3 + \frac{e_1 - e_3}{{\rm sn}^2(x,k)} \\
&= e_3 + (e_1-e_3) k^2 {\rm sn}^2(x+iK'(k),k) \\
&= e_3 + (e_2-e_3) {\rm sn}^2(x+iK'(k),k) \\
&= e_2 - (e_2-e_3) {\rm cn}^2(x+iK'(k),k),
\end{align*}
where we have used the property $k {\rm sn}(x+i K'(k),k) = {\rm sn}(x,k)$ \cite[8.151]{Grad}, the definition 
$$
k^2 = \frac{e_2-e_3}{e_1-e_3},
$$ 
and the first relation in (\ref{fund-relation}). Thus, we
obtain, due to the relation (\ref{Byrd-relation}) that
\begin{equation}\label{k_cn}
\begin{split}
k^2 {\rm cn}^2(x,k) &= \frac{e_2 - P\left(\frac{x-i K'(k)}{\sqrt{e_1-e_3} }\right)}{e_1-e_3} \\
&= \frac{e_2-c_1}{e_1-e_3} + \partial_x^2 \log \theta_1 \left(\frac{\pi (x - i K'(k))}{2 K(k)}\right) \\
&= \frac{e_2-c_1}{e_1-e_3} + \partial_x^2 \log \theta_4 \left(\frac{\pi x}{2 K(k)}\right) \\
&= \frac{e_2-c_1}{e_1-e_3} + \partial_x^2 \log \Theta(x),
\end{split}
\end{equation}
where we have used the half-period translation  \cite[8.183]{Grad}:
$$
\theta_1\left(u - \frac{i\pi K'(k)}{2K(k)}\right) = -i e^{\frac{\pi K'(k)}{4 K(k)}} e^{iu} \theta_4(u)
$$
and $\partial_x^2 \log e^{c_2 + c_3 x} = 0$ for every $c_2,c_3 \in \mathbb{C}$.
To find the specific constant $\frac{e_2-c_1}{e_1-e_3}$, we evaluate the relation (\ref{k_cn}) at $x = 0$: 
\begin{align*}
\frac{e_2-c_0}{e_1-e_3} &= k^2 - \frac{\Theta''(0)}{\Theta(0)} \\ 
&= k^2 - 1 + \frac{E(k)}{K(k)},
\end{align*}
where we have used \cite[8.196]{Grad}. This yields (\ref{relation}).
\end{proof}

The following two theorems present the construction of bright and dark
breathers in the form (\ref{new-solution}) with either
(\ref{tau-function}) or (\ref{tau-function-dark}). These two theorems
contribute to the main result of this work.

\begin{theorem}
	\label{th-bright}
	There exists an exact solution to the KdV equation (\ref{kdv}) in the form 
	(\ref{new-solution}) with (\ref{tau-function}), where 
	$x_0 \in \mathbb{R}$ is arbitrary and where $\alpha_b \in (0,K(k))$, $\kappa_b > 0$, and $c_b > c_0$ are uniquely defined from $\lambda \in (-\infty,\lambda_1(k))$  by  
	\begin{align}
	\label{eq:3}
	\alpha_b &= F(\varphi_\gamma,k),  \\
	\label{eq:18}
\kappa_b &=
\frac{\sqrt{1-\lambda-k^2} \sqrt{-\lambda-k^2}}{\sqrt{1-2k^2-\lambda}}  - Z(\varphi_\gamma,k), \\
	\label{eq:4}
	c_b &= c_0 + 
	\frac{4 \sqrt{1-\lambda-2k^2} \sqrt{1-\lambda-k^2} \sqrt{-\lambda - k^2}}{\kappa_b},
	\end{align}
with $\varphi_{\gamma} \in (0,\frac{\pi}{2})$ being found from 
\begin{equation}
\label{phi-gamma}
\sin \varphi_\gamma =
\frac{\sqrt{-\lambda-k^2}}{\sqrt{1-2k^2-\lambda}}.
\end{equation}
\end{theorem}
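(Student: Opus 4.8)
The plan is to realize the bright breather as a one-fold Darboux transformation \eqref{hat-u} of the cnoidal wave, seeded by a real, sign-definite linear combination of the two Lam\'e solutions \eqref{eigenfunctions} taken at the complex spectral parameter $\alpha$ associated with $\lambda\in(-\infty,\lambda_1(k))$. By Proposition \ref{prop-path} this $\alpha$ has the form $\alpha=K(k)+iK'(k)+\gamma$ with $\gamma=F(\varphi_\gamma,k)\in(0,K(k))$ and $\varphi_\gamma$ given by \eqref{phi-gamma}, which immediately fixes the candidate $\alpha_b=\gamma$, i.e. \eqref{eq:3}. Writing $y:=x-c_0t$ and $v_0:=v_+-v_-$, the whole task reduces to rewriting $\hat u=\phi_0(y)+2\partial_x^2\log v_0$ in the stated form and reading off $\kappa_b$ and $c_b$.

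The central computation converts the $H$-functions with complex shift into theta functions of the real argument $\gamma$. Using $H(y\pm\alpha)=H\big((y\pm\gamma)+K(k)+iK'(k)\big)$ together with the half-period identities established in the proof of Proposition \ref{prop-omega}, one obtains $H(y+\alpha)=C\,e^{-i\pi(y+\gamma)/(2K(k))}\Theta_1(y+\gamma)$ and, by oddness of $H$ and evenness of $\Theta_1$, a companion formula for $H(y-\alpha)$ with the same constant $C$. The same identities give $Z(\alpha)=\Theta'(\alpha)/\Theta(\alpha)=-\tfrac{i\pi}{2K(k)}+H_1'(\gamma)/H_1(\gamma)$, so that in each of $v_\pm$ the imaginary exponential $e^{\mp i\pi y/(2K(k))}$ produced by the $H$-conversion is exactly cancelled by the one coming from $e^{\mp yZ(\alpha)}$. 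What survives is the real growth rate $\kappa_b:=-H_1'(\gamma)/H_1(\gamma)$, positive because $H_1'/H_1<0$ on $(0,K(k))$ as noted after \eqref{omega-gamma}. Collecting terms and discarding the common constant $C$ (harmless under $\partial_x^2\log$) yields
\[
v_0\,\Theta(y)=\mathrm{const}\times\Big[\Theta_1(y+\gamma)e^{\kappa_b y-\omega t}+\Theta_1(y-\gamma)e^{-\kappa_b y+\omega t}\Big],
\]
with both summands positive, which reconfirms that $v_0$ is sign-definite and hence $\hat u$ real and nonsingular. Applying Proposition \ref{prop-relation} to the denominator $\Theta(y)$ replaces $\phi_0(y)$ by the constant $2[k^2-1+E(k)/K(k)]$, so that $\hat u=2[k^2-1+E(k)/K(k)]+2\partial_x^2\log(v_0\Theta(y))$, i.e. \eqref{new-solution} with $\tau=v_0\Theta(y)$.

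It remains to match parameters. For $\kappa_b$ I will use the logarithmic-derivative identity $H_1'(\gamma)/H_1(\gamma)=Z(\gamma)-{\rm sn}(\gamma){\rm dn}(\gamma)/{\rm cn}(\gamma)$ (from writing ${\rm cn}$ as a ratio of theta functions) and substitute ${\rm sn}(\gamma)=\sin\varphi_\gamma$, ${\rm cn}^2(\gamma)=(1-k^2)/(1-2k^2-\lambda)$, ${\rm dn}^2(\gamma)=(1-k^2)(1-k^2-\lambda)/(1-2k^2-\lambda)$ read off from Proposition \ref{prop-path}; this collapses ${\rm sn}\,{\rm dn}/{\rm cn}-Z$ to the right-hand side of \eqref{eq:18}. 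Matching the time dependence of the two exponentials forces $c_b=c_0+\omega(\alpha)/\kappa_b$; inserting \eqref{omega-gamma} and the companion identity $\Theta_1'(\gamma)/\Theta_1(\gamma)=Z(\gamma)-k^2{\rm sn}(\gamma){\rm cn}(\gamma)/{\rm dn}(\gamma)$ gives $H_1'/H_1-\Theta_1'/\Theta_1=-(1-k^2){\rm sn}(\gamma)/({\rm cn}(\gamma){\rm dn}(\gamma))$, whence $\omega(\alpha)=4\sqrt{(1-2k^2-\lambda)(1-k^2-\lambda)(-k^2-\lambda)}$ after the same substitutions; this is \eqref{eq:4}, and $\omega>0$, $\kappa_b>0$ give $c_b>c_0$. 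Uniqueness of $(\alpha_b,\kappa_b,c_b)$ follows since $\lambda\mapsto\gamma$ is a monotone bijection by Proposition \ref{prop-path}.

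The one genuinely delicate point is that the construction produces $\Theta_1(y\pm\gamma)=\Theta(y\pm\gamma+K(k))$ rather than $\Theta(y\pm\gamma)$, so the background in $\tau=v_0\Theta(y)$ is the half-period-shifted cnoidal wave. I resolve this by the real translation $x\mapsto x-K(k)$: since $\Theta_1(u)=\Theta(u+K(k))$, this turns $\tau$ into $e^{-\kappa_b K(k)}\Theta(y+\gamma)e^{\kappa_b(x-c_bt)}+e^{\kappa_b K(k)}\Theta(y-\gamma)e^{-\kappa_b(x-c_bt)}$, and the constants $e^{\mp\kappa_b K(k)}$ are absorbed into the free phase $x_0$ (the overall factor being irrelevant under $\partial_x^2\log$). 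This produces exactly \eqref{tau-function} with $\alpha_b=\gamma$, while the arbitrariness of $x_0$ accounts for the full translation orbit. I expect the bookkeeping of the theta-function half-period shifts — tracking the imaginary exponentials until they cancel and correctly converting $\Theta_1\to\Theta$ — to be the main obstacle, whereas the parameter identifications are routine once the Jacobian logarithmic-derivative identities are in hand.
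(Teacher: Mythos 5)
Your proposal is correct and follows essentially the same route as the paper's own proof: the same Darboux seed $v_+ - v_-$ at $\alpha = K(k)+iK'(k)+\gamma$, the same half-period theta identities to cancel the imaginary phases and extract $\kappa_b = -H_1'(\gamma)/H_1(\gamma)$, the same use of Proposition \ref{prop-relation} to reach the form (\ref{new-solution}), the same Byrd logarithmic-derivative identities to obtain (\ref{eq:18})--(\ref{eq:4}), and the same final real shift $x \mapsto x - K(k)$ to convert $\Theta_1$ into $\Theta$. The only cosmetic difference is that you absorb the constants $e^{\mp\kappa_b K(k)}$ into $x_0$ at the end, whereas the paper builds them into the choice of $c_\pm$ from the start.
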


\begin{proof}
Consider a linear combination of the two solutions to the linear system 
(\ref{stat-Sch}) and (\ref{time-evol}) in the form (\ref{eigenfunctions}) with $\alpha = K(k) + i K'(k) + \gamma$ and $\gamma = F(\varphi_{\gamma},k) \in (0,K(k))$:
\begin{equation}
\label{superposition}
v_0(x,t) = c_+ \frac{H(x - c_0t + \alpha)}{\Theta(x-c_0t)} e^{- (x - c_0 t) Z(\alpha) - \omega(\alpha) t} + c_- \frac{H(x - c_0t - \alpha)}{\Theta(x-c_0t)} e^{+(x - c_0 t) Z(\alpha) + \omega(\alpha) t},
\end{equation}
where $(c_+,c_-)$ are arbitary constants. By using the half-period translations 
of the Jacobi theta functions \cite[8.183]{Grad}, we obtain for $\alpha = K(k) + i K'(k) + \gamma$:
\begin{align*}
H(x + \alpha) &= e^{\frac{\pi K'(k)}{4 K(k)} - \frac{i \pi (x + \gamma)}{2 K(k)}} \Theta(x + K(k) + \gamma), \\
H(x - \alpha) &= -e^{\frac{\pi K'(k)}{4 K(k)} + \frac{i \pi (x - \gamma)}{2 K(k)}} \Theta(x + K(k) - \gamma), 
\end{align*}
and
\begin{align*}
Z(\alpha) = \frac{H_1'(\gamma)}{H_1(\gamma)} - \frac{i \pi}{2 K(k)}.
\end{align*}
Substituting these expressions into (\ref{superposition}) cancels the $x$-dependent complex phases. Anticipating (\ref{hat-u}), we set
$$
c_+ = c e^{-(K(k) + x_0) \frac{H_1'(\gamma)}{H_1(\gamma)}}, \quad 
c_- = -c e^{(K(k) + x_0) \frac{H_1'(\gamma)}{H_1(\gamma)}}
$$ 
with arbitrary parameters $c,x_0 \in \mathbb{R}$, from which the
constant $c$ cancels out due to the second logarithmic derivative.
Using $c_\pm$ in \eqref{superposition}, inserting $v_0$ into
\eqref{hat-u}, and simplifying with the help of (\ref{relation}), we obtain a new solution in the final form $u(x,t) := \hat{u}(x-K(k),t)$, where $u(x,t)$ is given by (\ref{new-solution}) with $\tau(x,t)$ given by  (\ref{tau-function})  
with the following parameters: $\alpha_b := \gamma \in (0,K(k))$, 
$\kappa_b := -\frac{H_1'(\gamma)}{H_1(\gamma)} > 0$, and 
\begin{align*}
c_b &:= c_0 - \omega(K(k) + i K'(k) + \gamma) \frac{H_1(\gamma)}{H_1'(\gamma)} \\
&= 4(k^2-\lambda) + 4 (\lambda + k^2 - 1) \frac{\Theta_1'(\gamma)
	H_1(\gamma)}{\Theta_1(\gamma) H_1'(\gamma)}, 
\end{align*}
where we have used (\ref{omega-gamma}). By using the following identities \cite[1053.02]{Byrd}
\begin{align*}
\frac{H_1'(\gamma)}{H_1(\gamma)} &= -\frac{{\rm sn}(\gamma,k) {\rm dn}(\gamma,k)}{{\rm cn}(\gamma,k)} + Z(\gamma), \\
\frac{\Theta_1'(\gamma)}{\Theta_1(\gamma)} &= -
\frac{k^2 {\rm sn}(\gamma,k) {\rm cn}(\gamma,k)}{{\rm dn}(\gamma,k)}
+ Z(\gamma),
\end{align*}
and the relation formulas $Z(\gamma) = Z(\varphi_{\gamma},k)$,
\begin{align*}
{\rm sn}(\gamma,k) = \sin(\varphi_{\gamma}) = \frac{\sqrt{-\lambda-k^2}}{\sqrt{1-2k^2-\lambda}}, \quad 
{\rm cn}(\gamma,k) = \cos(\varphi_{\gamma}) = \frac{\sqrt{1-k^2}}{\sqrt{1-2k^2-\lambda}}, 
\end{align*}
and
\begin{align*}
{\rm dn}(\gamma,k) = \frac{\sqrt{1-k^2} \sqrt{1-\lambda-k^2}}{\sqrt{1-2k^2-\lambda}},
\end{align*}
we express the parameters $\alpha_b$, $\kappa_b$, and $c_b$ in terms
of incomplete elliptic integrals in (\ref{eq:3}), (\ref{eq:18}), and
(\ref{eq:4}).  Since $\kappa_b > 0$, it follows that $c_b > c_0$.
\end{proof}

\begin{remark}
  The solution $u(x,t)$ obtained in the proof of Theorem
  \ref{th-bright} is the half-period translation along the real axis
  of the solution $\hat{u}(x,t)$ defined by (\ref{hat-u}).
\end{remark}

\begin{remark}
  \label{rem-bright}
  Since $\kappa_b > 0$, it follows from (\ref{new-solution}),
  (\ref{tau-function}), and (\ref{relation}) that
  \begin{equation*}
    u(x,t) \to 2 k^2 {\rm cn}^2(x - c_0t \pm \alpha_b,k) \quad \mbox{\rm as} 
    \;\; x-c_b t \to \pm \infty. 
  \end{equation*}
  A suitably normalized phase shift of the background cnoidal wave can
  be written in the form:
  \begin{equation*}
    \Delta_b := -\frac{2\pi \alpha_b}{K(k)} =
    -\frac{2\pi F(\varphi_\gamma,k)}{K(k)} \in (-2\pi,0).
  \end{equation*}
  When $\Delta_b \in (-\pi,0)$, the normalized phase shift is negative.
  When $\Delta_b \in (-2\pi,-\pi]$, the normalized phase shift is
  considered to be positive by a period translation to
  $\Delta_b + 2\pi \in (0,\pi]$.
\end{remark}

\begin{theorem}
	\label{th-dark}
	There exists an exact solution to the KdV equation (\ref{kdv}) in the form 
	(\ref{new-solution}) with (\ref{tau-function-dark}), where 
	$x_0 \in \mathbb{R}$ is arbitrary and where $\alpha_d \in (0,K(k))$, $\kappa_d > 0$, and $c_d < c_0$ are uniquely defined from $\lambda \in (\lambda_2(k),\lambda_3(k))$ by  
	\begin{align}
	\label{eq:6}
          \alpha_d &= F(\varphi_\alpha,k), \\
	\label{eq:25}
	\kappa_d &= Z(\varphi_\alpha,k), \\
	\label{eq:24}
	c_d &= c_0 - \frac{4 \sqrt{(k^2+\lambda)(\lambda-1+2k^2)(1-k^2-\lambda)}}{\kappa_d},
	\end{align}
with $\varphi_{\alpha} \in (0,\frac{\pi}{2})$ being found from 
\begin{equation}
\label{phi-alpha}
\sin \varphi_\alpha =
\frac{\sqrt{1-k^2-\lambda}}{k}.
\end{equation}
\end{theorem}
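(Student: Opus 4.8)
The plan is to follow the proof of Theorem~\ref{th-bright} with the roles of $H$ and $\Theta$ interchanged. Fix $\lambda \in (\lambda_2(k),\lambda_3(k))$. By Proposition~\ref{prop-path} the spectral parameter is now real, $\alpha = F(\varphi_\alpha,k) \in (0,K(k))$ with $\varphi_\alpha$ given by (\ref{phi-alpha}), and by Proposition~\ref{prop-omega} both $Z(\alpha) = Z(\varphi_\alpha,k)$ and $\omega(\alpha)$ are real. Hence the two eigenfunctions $v_\pm(x,t)$ of (\ref{eigenfunctions}) are real-valued, and I would again form the linear combination $v_0 = c_+ v_+ + c_- v_-$ as in (\ref{superposition}).

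The essential new difficulty is the one flagged in Section~\ref{sec-5}: for $\lambda$ in the finite gap the real-$\alpha$ eigenfunctions $v_\pm(x,t)$ vanish once per period of $\phi_0$ by Sturm's nodal theorem, so $v_0$ has zeros and the raw Darboux solution (\ref{hat-u}) is singular. I would resolve this by a complex half-period translation, setting $u(x,t) := \hat{u}(x + iK'(k),t)$ as the dark analogue of the real shift $\hat{u}(x-K(k),t)$ used in the bright case. Writing $\xi := x - c_0 t$, the half-period formulas \cite[8.183]{Grad} convert $H(\xi + iK'(k) \pm \alpha)$ into $\Theta(\xi \pm \alpha)$ and $\Theta(\xi + iK'(k))$ into $H(\xi)$, each up to a factor $e^{c_2+c_3 x}$ with $c_2,c_3 \in \mathbb{C}$ that is annihilated by $\partial_x^2\log$; the remaining constant phases $e^{\mp i\pi\alpha/(2K(k))}e^{\mp iK'(k)Z(\alpha)}$ are absorbed into $c_\pm$.

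The crucial cancellation is that, after rewriting the now singular shifted background $\phi_0(\xi+iK'(k))$ through (\ref{relation}), the factor $\Theta(\xi+iK'(k))$ exactly cancels the $H(\xi)$ in the denominator of $v_0(x+iK'(k),t)$; the apparent singularities thus disappear and one is left with (\ref{new-solution}) whose $\tau$-function is a combination of the two entire, strictly positive factors $\Theta(\xi\pm\alpha)$. Choosing $c_\pm$ to cancel the residual constant phases and to introduce $x_0$ exactly as in the bright case---with the overall constant dropping out of $\partial_x^2\log$---reproduces (\ref{tau-function-dark}) and identifies $\alpha_d = \alpha = F(\varphi_\alpha,k)$, namely (\ref{eq:6}), together with $\kappa_d = Z(\alpha) = Z(\varphi_\alpha,k)$, namely (\ref{eq:25}). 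Matching the exponential then gives the breather velocity $c_d = c_0 - \omega(\alpha)/\kappa_d$; evaluating (\ref{omega}) for real $\alpha$ with $\Theta'(\alpha)/\Theta(\alpha)=Z(\alpha)$ and, since ${\rm sn}(\alpha,k)$ is proportional to $H(\alpha)/\Theta(\alpha)$, with $H'(\alpha)/H(\alpha) = {\rm cn}(\alpha,k){\rm dn}(\alpha,k)/{\rm sn}(\alpha,k)+Z(\alpha)$, collapses the bracket to $-{\rm cn}(\alpha,k){\rm dn}(\alpha,k)/{\rm sn}(\alpha,k)$. Substituting ${\rm sn}(\alpha,k)=\sqrt{1-k^2-\lambda}/k$, ${\rm cn}(\alpha,k)=\sqrt{\lambda-1+2k^2}/k$, and ${\rm dn}(\alpha,k)=\sqrt{\lambda+k^2}$, which follow from (\ref{phi-alpha}) and (\ref{fund-relation}), yields (\ref{eq:24}).

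The positivity claims are then immediate. Since $\Theta$ is strictly positive on the real axis, the $\tau$-function (\ref{tau-function-dark}) is a sum of two strictly positive terms, so $\tau>0$ everywhere and $u$ is smooth---which is precisely what the complex shift buys. For $\lambda\in(\lambda_2(k),\lambda_3(k))$ one has $\varphi_\alpha\in(0,\frac{\pi}{2})$, hence $\alpha\in(0,K(k))$ and $\kappa_d=Z(\varphi_\alpha,k)>0$, while $k^2+\lambda$, $\lambda-1+2k^2$ and $1-k^2-\lambda$ are each strictly positive on this interval, so the radicand in (\ref{eq:24}) is positive and $c_d<c_0$. I expect the main obstacle to be the bookkeeping of the complex translation: one must verify that every $x$-dependent complex phase produced by the shift is either killed by $\partial_x^2\log$ or cancelled by $c_\pm$, so that the analytically continued function $\hat{u}(x+iK'(k),t)$---which solves (\ref{kdv}) by translation invariance---is genuinely real-valued and smooth.
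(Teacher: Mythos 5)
Your proposal is correct and follows essentially the same route as the paper's own proof: the linear superposition (\ref{superposition}) with real $\alpha=F(\varphi_\alpha,k)$, the imaginary half-period shift $u(x,t):=\hat{u}(x+iK'(k),t)$ with the translations \cite[8.183]{Grad} converting $H(\cdot\pm\alpha)$ into $\Theta(\cdot\pm\alpha)$, absorption of the constant phases into $c_\pm$, and the evaluation $c_d=c_0-\omega(\alpha)/Z(\alpha)$ via the logarithmic-derivative identities \cite[1053.02]{Byrd}. Your computation even states ${\rm dn}(\alpha,k)=\sqrt{\lambda+k^2}$ correctly, where the paper's text has a minor typographical slip.
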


\begin{proof}
  When $\lambda \in (\lambda_2(k),\lambda_3(k))$,
  $\alpha = F(\varphi_{\alpha},k) \in (0,K(k))$, $\omega(\alpha)$ and
  $Z(\alpha) = Z(\varphi_\alpha,k)$ are real by Propositions
  \ref{prop-path} and \ref{prop-omega}. However, the functions
  $H(x \pm \alpha)$ change sign so that we should express them in
  terms of the functions $\Theta(x \pm \alpha)$ after complex
  translation of phases. This is achieved by the half-period
  translations \cite[8.183]{Grad}:
  \begin{align*}
	H(x + \alpha) &= i e^{-\frac{\pi K'(k)}{4 K(k)} - \frac{i \pi (x +
                    \alpha)}{2 K(k)}} \Theta(x + \alpha - i K'(k)), \\ 
	H(x - \alpha) &= i e^{-\frac{\pi K'(k)}{4 K(k)} - \frac{i \pi (x -
                    \alpha)}{2 K(k)}} \Theta(x - \alpha - i K'(k)). 
  \end{align*}
  The $x$-dependent complex phase is now a multiplier in the linear
  superposition (\ref{superposition}) which vanishes in the result
  due to the second logarithmic derivative. By using (\ref{hat-u}) and
  (\ref{relation}), we set
  $$
  c_+ = c e^{-(x_0-iK'(k)) Z(\alpha)+\frac{i \pi \alpha}{2 K(k)}}, \quad 
  c_- = c e^{(x_0-iK'(k)) Z(\alpha)-\frac{i \pi \alpha}{2 K(k)}},
  $$ 
  and obtain a new solution in the final form
  $u(x,t) := \hat{u}(x+iK'(k),t)$ with the same $u(x,t)$ as in
  (\ref{new-solution}) and with $\tau(x,t)$ given by
  (\ref{tau-function-dark}) with the following parameters:
  $\alpha_d := \alpha \in (0,K(k))$, $\kappa_b := Z(\alpha) > 0$, and
  \begin{align*}
	c_d &= c_0 - \frac{\omega(\alpha)}{Z(\alpha)} \\
        &= 4(k^2-\lambda) + 4 (\lambda + k^2 - 1) \frac{\Theta(\alpha)
          H'(\alpha)}{\Theta'(\alpha) H(\alpha)},
  \end{align*}
  where we have used (\ref{omega}).  Using the following identities
  \cite[1053.02]{Byrd}
  \begin{align*}
  \frac{H'(\alpha)}{H(\alpha)} &= \frac{{\rm
        cn}(\alpha,k) {\rm dn}(\alpha,k)}{{\rm sn}(\alpha,k)} +
                                                 Z(\alpha), \\
    \frac{\Theta'(\alpha)}{\Theta(\alpha)} &= Z(\alpha),
  \end{align*}
  and the relations $Z(\alpha) = Z(\varphi_{\alpha},k)$,
  \begin{align*} {\rm sn}(\alpha,k) = \sin(\varphi_{\alpha}) =
    \frac{\sqrt{1-\lambda-k^2}}{k}, \quad {\rm cn}(\alpha,k) =
    \cos(\varphi_{\alpha}) = \frac{\sqrt{\lambda -1+2k^2}}{k},
  \end{align*}
  and ${\rm dn}(\alpha,k) = \lambda + k^2$, we express the parameters
  $\alpha_d$, $\kappa_d$, and $c_d$ in terms of incomplete elliptic
  integrals as (\ref{eq:6}), (\ref{eq:25}), and (\ref{eq:24}). Since
  $\kappa_d > 0$, we have $c_d < c_0$.
\end{proof}

\begin{remark}
  The solution $u(x,t)$ obtained in the proof of Theorem \ref{th-dark}
  is the half-period translation along the imaginary axis of the
  solution $\hat{u}(x,t)$ defined by (\ref{hat-u}).
\end{remark}

\begin{remark}
  \label{rem-dark}
  Since $Z(\varphi_\alpha,k) > 0$, it follows from
  (\ref{new-solution}), (\ref{tau-function-dark}), and
  (\ref{relation}) that
  \begin{equation*}
    u(x,t) \to 2 k^2 {\rm cn}^2(x - c_0t \mp \alpha_d,k) \quad \mbox{\rm as}
    \;\; x-c_d t \to \pm \infty.
  \end{equation*}
  A suitably normalized phase shift of the background cnoidal wave can
  be written in the form:
  \begin{equation}
    \label{dark-breather-phase-shift}
    \Delta_d = \frac{2\pi \alpha_d}{K(k)} = \frac{\pi
      F(\varphi_\alpha,k)}{K(k)} \in (0,2\pi) .
  \end{equation}
  When $\Delta_d \in (0,\pi]$, the normalized phase shift is
  positive.  When $\Delta_d \in (\pi,2\pi)$, the normalized phase
  shift is considered to be negative by translation to
  $\Delta_d - 2\pi \in (-\pi,0)$.
\end{remark}

\section{Properties of the bright breather}
\label{sec-6}

Figure \ref{dep bright} plots $\Delta_b$, $\kappa_b$, and $c_b$ for a
bright breather as a function of the parameter $\lambda$, see Theorem
\ref{th-bright} and Remark \ref{rem-bright}.  The phase shift
$\Delta_b$ increases monotonically while the inverse width $\kappa_b$
and the breather speed $c_b$ decrease monotonically as $\lambda$
increases from $-\infty$ towards the band edge $\lambda_1(k),$ shown
by the vertical dashed line. Since $c_0 = 1.12$ for $k = 0.8$, we
confirm that $c_b > c_0$, which can also be observed in Figure
\ref{bright sol}.

\begin{figure}[htb!]
  \centering
  \includegraphics{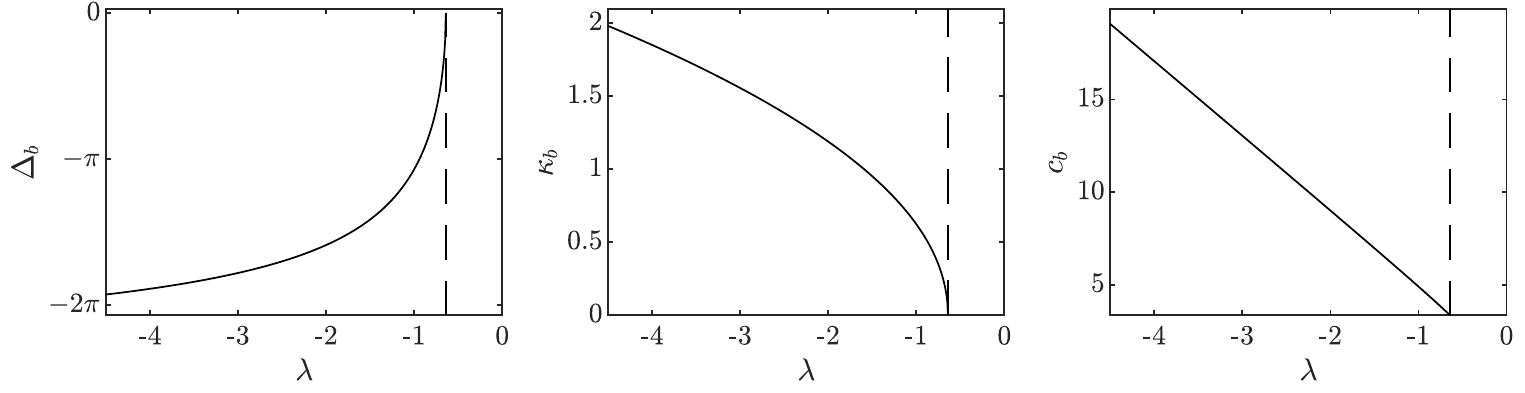}
  \caption{Normalized phase shift $\Delta_b$ (left), inverse
    width $\kappa_b$ (middle), and breather speed $c_b$ (right) versus
    $\lambda$ in $(-\infty,\lambda_1(k))$ for $k=0.8$. The band edge
    $\lambda_1(k)=-k^2$ is shown by the vertical dashed line.}
  \label{dep bright}
\end{figure}

\begin{figure}[htb!]
	\centering
	\includegraphics{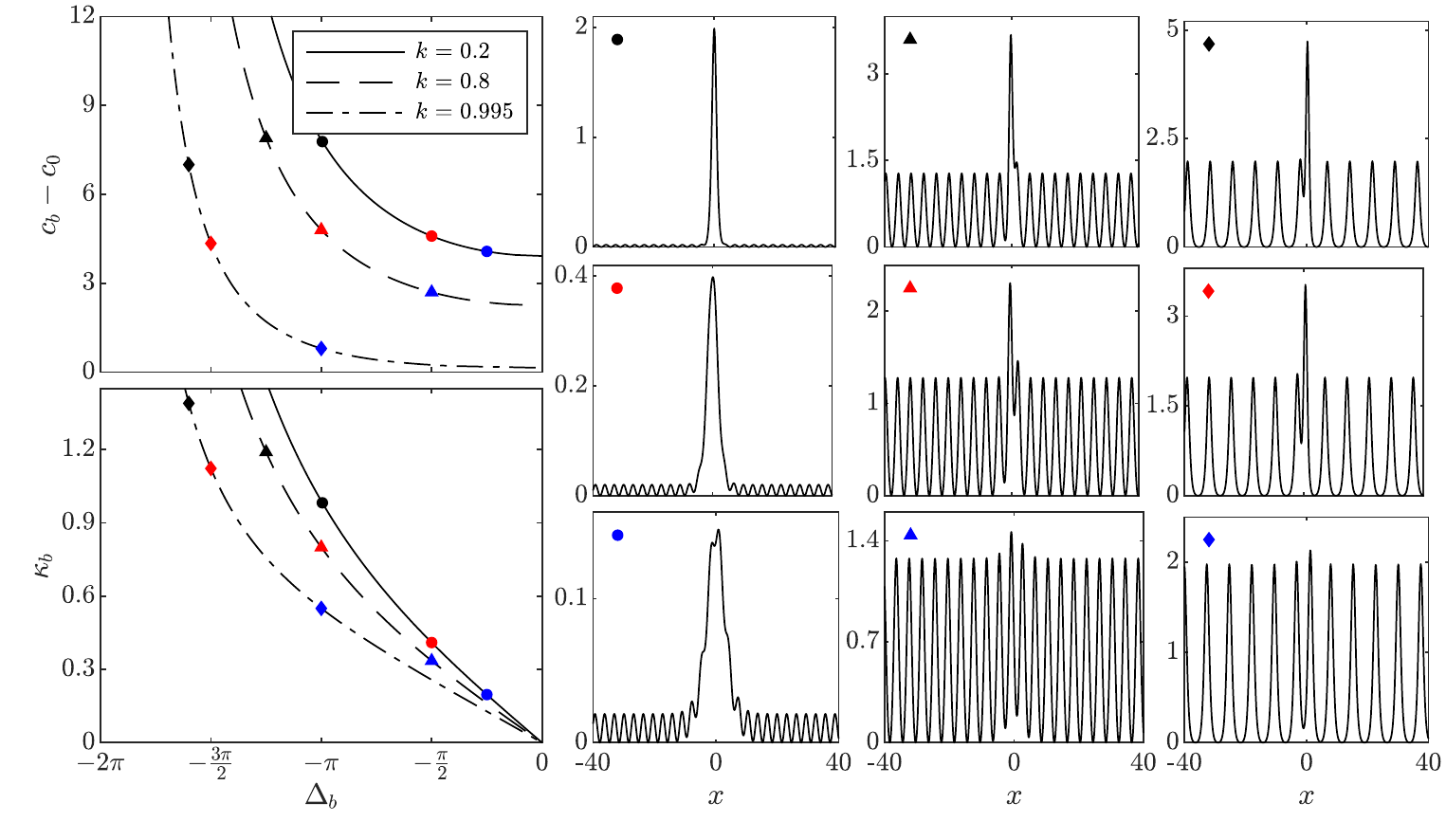}
	\caption{Left top (bottom): dependence of $c_b - c_0$ ($\kappa_b$)
      versus $\Delta_b$ for several values of $k$. Right:
      representative bright breather solutions. Representative
      solutions are marked on the left panel with a unique colored
      symbol.}
	\label{fig:bright-breather-dispersion-modes}
\end{figure}

Figure \ref{fig:bright-breather-dispersion-modes} characterizes the
family of bright breathers by plotting $c_b - c_0$ and $\kappa_b$
versus $\Delta_b$ for three values of $k$. Profiles of representative
breather solutions shown in Figure
\ref{fig:bright-breather-dispersion-modes} confirm why we call them
bright breathers. Bright breathers are more localized, have larger
amplitudes, and move faster for smaller (more negative) values of
$\Delta_b$ (smaller values of $\lambda$).  For sufficiently large
amplitude, $\Delta_b$ falls below $-\pi$ and the breather exhibits a
positive phase shift $\Delta_{b} + 2\pi \in (0,\pi]$ (cf.~Remark
\ref{rem-bright}). In contrast, for sufficiently small-amplitude
breathers, $\Delta_b \in (-\pi,0)$ and the phase shift is negative.

\subsection{Asymptotic limits $\lambda \to -\infty$ and
$\lambda \to \lambda_1(k)$} \hfill\\
\\
It follows from (\ref{phi-gamma}) that
\begin{equation*}
\varphi_{\gamma} = \left\{ 
\begin{array}{ll} \displaystyle
\frac{\pi}{2} - \frac{\sqrt{1-k^2}}{\sqrt{|\lambda|}} + \mathcal{O}(\sqrt{|\lambda|^{-3}}) \quad & \mbox{\rm as} \;\; \lambda \to -\infty, \\  \displaystyle
\frac{\sqrt{|\lambda| - k^2}}{\sqrt{1-k^2}} + \mathcal{O}(\sqrt{(|\lambda| - k^2)^3}) \quad & \mbox{\rm as} \;\; \lambda \to \lambda_1(k).
\end{array}
\right.
\end{equation*}
We also use the following asymptotic expansions of the elliptic integrals:
$$
F(\varphi,k) = \varphi + \mathcal{O}(\varphi^3), \quad 
E(\varphi,k) = \varphi + \mathcal{O}(\varphi^3), \quad \mbox{\rm as} \;\; 
\varphi \to 0
$$
and
$$
F(\varphi,k) = K(k) + \mathcal{O}(\tfrac12\pi-\varphi), \quad 
E(\varphi,k) = E(k) + \mathcal{O}(\tfrac12\pi-\varphi), \quad
\mbox{\rm as} \;\;  
\varphi \to \frac{\pi}{2}
$$
The itemized list below summarizes the asymptotic results, where we use the asymptotic equivalence for the leading-order terms and neglect writing the remainder terms. 

\begin{itemize}
\item The asymptotic values of the normalized phase shift $\Delta_b$ are
\begin{equation*}
\Delta_b \sim \left\{ 
\begin{array}{ll} \displaystyle
-2\pi + \frac{2\pi}{\sqrt{|\lambda|} K(k)} \quad & \mbox{\rm as} \;\; \lambda \to -\infty, \\  \displaystyle
-\frac{2\pi \sqrt{|\lambda|-k^2}}{\sqrt{1-k^2} K(k)} \quad & \mbox{\rm as} \;\; \lambda \to \lambda_1(k).
\end{array}
\right.
\end{equation*}
Since
$\partial_{\varphi} F(\varphi,k) = (1-k^2 \sin^2 \varphi)^{-1/2} > 0$
and $\partial_{\lambda} \varphi_{\gamma} < 0$, the normalized phase
shift $\Delta_b$ is a monotonically increasing function of $\lambda$
from $-2\pi$ to $0$.  This proves that the map $\lambda
\mapsto \Delta_b(\lambda)$ is one-to-one and onto from
$(-\infty,\lambda_1)$ to $(-2\pi,0)$.  \\

\item The asymptotic values for the inverse width $\kappa_b$ are
\begin{equation*}
\kappa_b \sim \left\{ 
\begin{array}{ll} \displaystyle
\sqrt{|\lambda|} \quad & \mbox{\rm as} \;\; \lambda \to -\infty, \\  \displaystyle
\sqrt{\frac{|\lambda|-k^2}{1-k^2}}
\frac{E(k)}{K(k)} \quad & \mbox{\rm as} \;\; \lambda \to \lambda_1(k).
\end{array}
\right.
\end{equation*}
The derivative is given by 
\begin{align*}
  \partial_\lambda \kappa_b &= - \frac{\sin
  \varphi_\gamma}{2\sqrt{1-\lambda-k^2}} \\
& \quad + \Bigg (\sqrt{1-\lambda-k^2}\cos\varphi_\gamma-\sqrt{1-k^2
  \sin^2\varphi_\gamma} + \frac{E(k)}{K(k)\sqrt{1-k^2
  \sin^2\varphi_\gamma}} \Bigg ) \partial_\lambda
  \varphi_\gamma.
\end{align*}
Since the terms in parentheses are positive and $\partial_\lambda \varphi_\gamma < 0$, we have $\partial_\lambda \kappa_b < 0$
so that $\kappa_b$ is a monotonically decreasing
function of $\lambda$.\\ 
  
\item The asymptotic values for the breather speed $c_b$ are 
\begin{equation*}
c_b \sim \left\{ 
\begin{array}{ll} \displaystyle
\frac{4|\lambda|}{2-k^2- E(k)/K(k)} \quad & \mbox{\rm as} \;\; \lambda \to -\infty, \\  \displaystyle
c_0 + 4(1-k^2) \frac{K(k)}{E(k)} \quad & \mbox{\rm as} \;\; \lambda \to \lambda_1(k).
\end{array}
\right.
\end{equation*}
The breather speed $c_b$ in \eqref{eq:4} satisfies $c_b > c_0$. Based on 
the graphs in Fig.~\ref{dep bright}, we conjecture that the breather velocity $c_b$ is a decreasing function of $\lambda$.  
\end{itemize}

\subsection{Asymptotic limits $k \to 0$ and $k \to 1$}
\label{sec:harm-solit-limits} \hfill\\
\\
In the limit $k \to 0$, the background cnoidal wave $\phi_0(x) = 2k^2 {\rm cn}(x;k)$ vanishes since $\Theta(x) \to 1$ as $k \to 0$ whereas 
it follows from (\ref{eq:18}) and (\ref{eq:4}) that 
$$
\kappa_b \to \sqrt{|\lambda|}, \qquad c_b \to 4 |\lambda|,
$$
since $Z(\varphi_{\gamma},k) \to 0$ and $c_0 \to -4$ as $k \to 0$.
The breather solution (\ref{new-solution}) with (\ref{tau-function}) recovers the one-soliton solution
\begin{equation*}
  u(x,t) \to 2 |\lambda| \; \mathrm{sech}^2 \left (
    \sqrt{|\lambda|}(x - 4 |\lambda| t + x_0) \right ) , \quad k \to 0,  
\end{equation*}
for every $\lambda \in (-\infty,0)$. \\
	
In the limit $k \to 1$, the background cnoidal wave
$\phi_0(x) = 2 k^2 {\rm cn}(x;k)$ transforms into the normalized
soliton $\phi_0(x) \to 2\, {\rm sech}^2(x)$ and we will show that the
breather solution (\ref{new-solution}) with (\ref{tau-function})
recovers the two-soliton solution.  It follows from (\ref{eq:18}) and
(\ref{eq:4}) that
$$
\kappa_b \to \sqrt{|\lambda|}, \qquad c_b \to 4 |\lambda|,
$$
since $Z(\varphi_{\gamma},k) \to 0$ and $c_0 \to 4$ as $k \to 1$.
Furthermore, it follows from (\ref{phi-gamma}) that $\varphi_{\gamma} \to \frac{\pi}{2}$ as $k \to 1$ so that $\alpha_b = F(\varphi_{\gamma},k) \to \infty$ as $k\to 1$. In order to regularize the solution, we use the 
translation invariance of the KdV equation, the $2K(k)$-periodicity of
$\Theta$, and define the half-period translation of
\eqref{tau-function} with the transformation $x \to x - K(k)$,
$x_0 \to x_0 + K(k)$:
\begin{equation}
\label{tau-2soli-bright}
\tau(x,t) = \Theta(x - c_0 t + \alpha_b - K(k)) e^{\kappa_b (x-c_b t
	+ x_0)} + \Theta(x - c_0 t - \alpha_b + K(k))
e^{-\kappa_b (x-c_b t + x_0)}.
\end{equation}
Recalling that $\alpha_b = F(\varphi_\gamma,k)$, for each
$\lambda \in (-\infty,-1)$, let us define the phase parameter
$\delta_b$ by evaluating the limit \cite[eq.~(2.14)]{VandeVel}:
\begin{equation}
\label{delta-2soli-bright}
\delta_b := \lim_{k\to 1} \left[ K(k) - F(\varphi_\gamma,k) \right] = \frac{1}{2}
\log \left ( \frac{\sqrt{-\lambda} + 1}{\sqrt{-\lambda} - 1}
\right).
\end{equation}
It remains to deduce the asymptotic formula for $\Theta$ as $k \to 1$. 
We show that 
\begin{equation}
\label{theta4-asymptotics}
\Theta(x) \sim \sqrt{\frac{-2 k' \log
		k'}{\pi}} \cosh(x), \qquad \mbox{\rm as} \;\; k \to 1,
\end{equation}
by using the Poisson summation formula \cite{Boyd}:
\begin{equation}
  \label{eq:9}
  \Theta(x) = \sum_{n=-\infty}^\infty f(n) =
  \sum_{n=-\infty}^\infty\hat{f}(n), 
\end{equation}
where $\hat{f}(m) = \int_{-\infty}^\infty f(n)e^{-2\pi i nm}dn$.
Since
\begin{align*}
  \Theta(x) = 1 + 2 \sum_{n=1}^{\infty} (-1)^n q^{n^2}
  \cos\left(\frac{n \pi x}{K(k)}\right), \qquad q := e^{-\frac{\pi K(k')}{K(k)}},
\end{align*}
where $k' = \sqrt{1-k^2}$, we obtain from (\ref{eq:9}) that
\begin{equation}
  \label{eq:7}
  f(n) = q^{n^2} e^{in\pi \left (1 + x/K(k) \right )}, \quad \hat{f}(n) =
  \sqrt{\frac{K(k)}{K(k')}} (q')^{(n-1/2-x/2K(k))^2} , 
\end{equation}
where $q' := e^{-\frac{\pi K(k)}{K(k')}}$.  As $k \to 1$, we have
$k' \to 0$ and 
\begin{equation*}
\begin{array}{l}
K(k) = -\log{k'}+2\log{2} + \mathcal{O}((k')^2), \\
K(k') = \frac{\pi}{2} + \frac{\pi}{8}k'^2 + \mathcal{O}((k')^4), \\
q' = \frac{1}{16} k'^2 + \frac{1}{32} k'^4 + \mathcal{O}((k')^6).
\end{array}
\end{equation*}
These expansions simplify \eqref{eq:7} to
\begin{align*}
  \hat{f}(n) &= \sqrt{\frac{-2\log{k'}}{\pi}} \left (
               \frac{k'}{4} \right)^{\frac{(2n-1)^2}{2}} e^{(2n-1)x} \left ( 1 +
               \frac{x^2-2\log{2}}{\log{k'}} + \cdots \right ), \quad \mbox{\rm as} \;\; k' \to 0, 
\end{align*}
for every fixed $x \in \R$.  Then, the rightmost summation in \eqref{eq:9} yields the asymptotic expansion $\Theta(x) = \hat{f}(0) + \hat{f}(1) + \cdots$ 
in the form (\ref{theta4-asymptotics}). Using it in (\ref{tau-2soli-bright}), we obtain the asymptotic expansion 
\begin{equation}
\tau(x,t) \sim  \sqrt{\frac{-2 k' \log{k'}}{\pi}} \Big[ 
\cosh(\xi_1 - \delta_b)
           e^{\sqrt{|\lambda|}\xi_2} + \cosh(\xi_1 + \delta_b)
           e^{-\sqrt{|\lambda|} \xi_2} \Big],
           \label{tau-new}
\end{equation}
where $\xi_1 = x-4t$ and $\xi_2 = x - 4|\lambda| t + x_0$ for every $\lambda \in (-\infty,-1)$.  

Using \eqref{tau-new} with \eqref{delta-2soli-bright} in
\eqref{new-solution}, we obtain the two-soliton solution in the form:
\begin{align*}
  u(x,t) = 2 \frac{e^{2\delta_b}(1-\sqrt{|\lambda|})^2 +
    e^{-2\delta_b}(1+\sqrt{|\lambda|})^2 + 2\cosh ( 2\sqrt{|\lambda|} \xi_2
    ) + 2 |\lambda| \cosh ( 2\xi_1)}{[
    e^{\sqrt{|\lambda|}\xi_2} 
    \cosh(\xi_1-\delta_b) + e^{-\sqrt{|\lambda|}\xi_2}
    \cosh(\xi_1+\delta_b)]^2}.
\end{align*}
The two-soliton solution exhibits the asymptotic behavior
\begin{align*}
  u(x,t) \sim \ 2 \; \mathrm{sech}^2\left ( \xi_1 \mp \delta_b
                  \right ) + 2 |\lambda| \; \mathrm{sech}^2\left
                  (\sqrt{|\lambda|} \xi_2 \pm
                  \delta_b \right ), \quad \mbox{\rm as} \;\; t \to \pm \infty.
\end{align*}
After the interaction, the slower soliton of amplitude 2 experiences the
negative phase shift $-2 \delta_b,$ whereas the faster soliton of
amplitude $2 |\lambda|$ exhibits the positive phase shift
$2 \delta_b/\sqrt{|\lambda|}$.

\section{Properties of the dark breather}
\label{sec-7}

Figure \ref{dep dark} plots $\Delta_d$, $\kappa_d$, and $c_d$ for dark
breathers as a function of the parameter $\lambda$, see Theorem
\ref{th-dark} and Remark \ref{rem-dark}.  The phase shift $\Delta_d$
is monotonically decreasing between the band edges $\lambda_2(k)$ and
$\lambda_3(k),$ shown by the vertical dashed lines. The inverse width
$\kappa_d$ has a single maximum and vanishes at the band edges. The
breather speed $c_d$ is monotonically decreasing. Since $c_0 = -0.08$
for $k = 0.7$, we confirm that $c_d < c_0$, which is also clear from
Figure \ref{dark sol}.

\begin{figure}[htb!]
  \centering
  \includegraphics{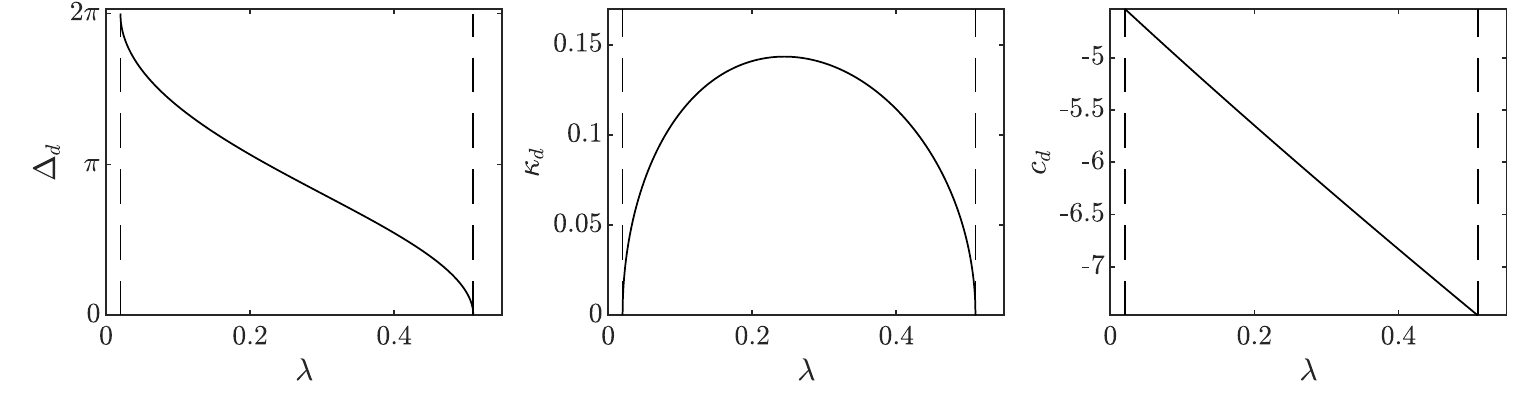}
  \caption{Normalized phase shift $\Delta_d$ (left), inverse width
    $\kappa_d$ (middle), and soliton speed $c_d$ (right) versus
    $\lambda$ in $(\lambda_2(k),\lambda_3(k))$ for $k = 0.7$. The band
    edges $\lambda_2(k) = 1-2k^2$ and $\lambda_3(k)=1-k^2$ are shown
    by the vertical dashed lines.}
  \label{dep dark}
\end{figure}

Figure \ref{fig:dark-breather-dispersion-modes} characterizes the
family of dark breathers by plotting $c_d - c_0$ and $\kappa_d$ versus
$\Delta_d$ for three values of $k$. The profiles of breather solutions
at $t = 0$ subject to the phase shift $x_0 = 5$ confirm why we refer
to them as dark breathers.  In contrast to the bright breather case,
dark breather solutions exhibit vanishing cnoidal wave modulations for
both of the extreme phase shifts $\Delta_d \to 0$ and
$\Delta_d \to 2\pi$, with the largest-amplitude breather occurring at
an intermediate phase shift, which we will later identify by examining
the inverse width $\kappa_d$.

\begin{figure}[htb!]
	\centering
	\includegraphics[height=9.7cm]{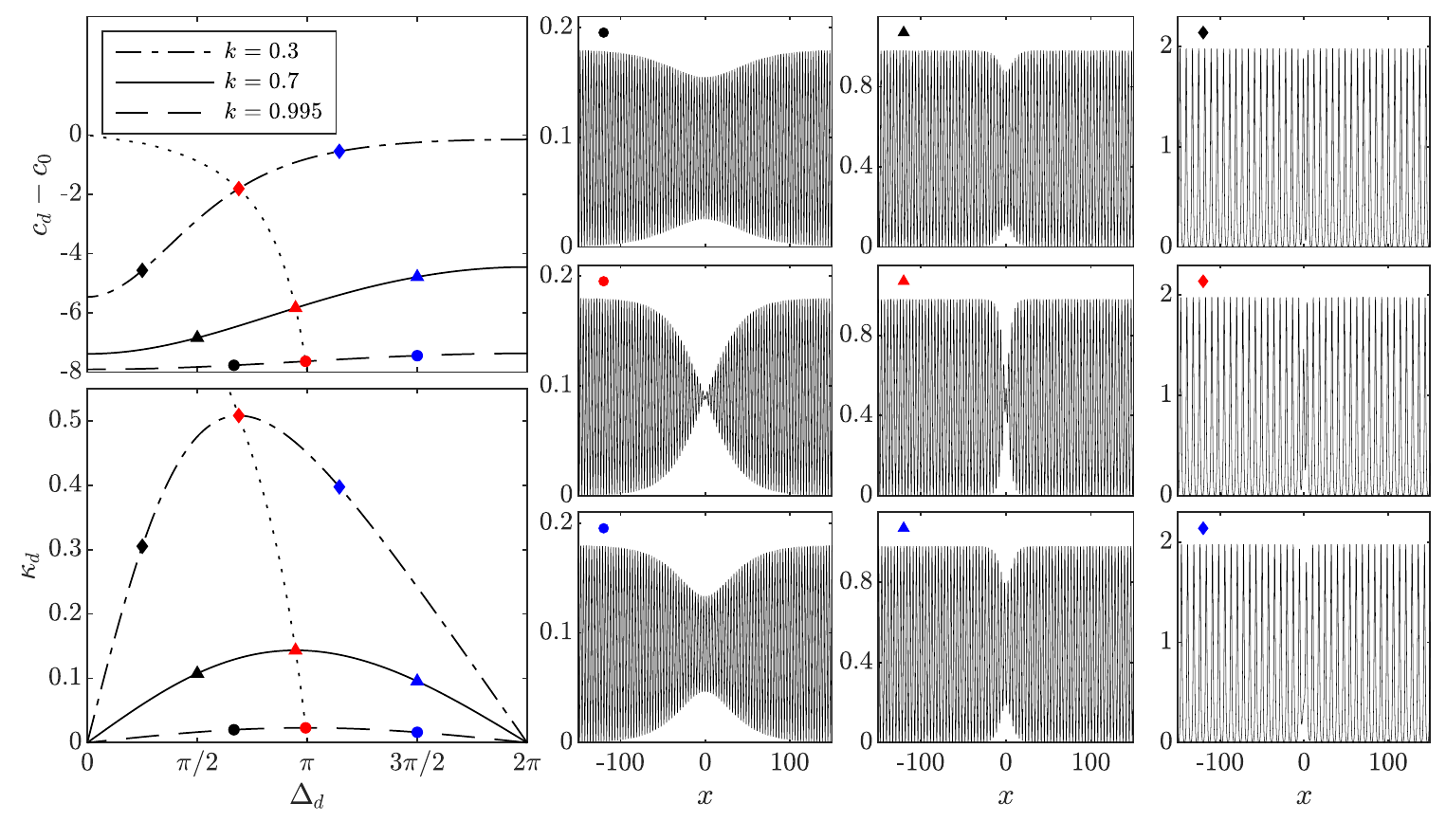}
	\caption{Left top (bottom): dependence of $c_d - c_0$ ($\kappa_d$)
      versus $\Delta_d$ for several values of $k$. Right:
      representative dark breather solutions. Representative solutions
      are marked on the left panel with a unique colored symbol. The
      dotted curve on the left panel corresponds to points of maximum
      $\kappa_d$ with the greatest localization.}
	\label{fig:dark-breather-dispersion-modes}
\end{figure}

\subsection{Asymptotic limits $\lambda \to \lambda_2(k)$ and
$\lambda \to \lambda_3(k)$} \hfill\\
\\
It follows from (\ref{phi-alpha}) that
\begin{equation*}
\varphi_{\alpha} = \left\{ 
  \begin{array}{ll} \displaystyle
    \frac{\pi}{2} - \frac{\sqrt{\lambda-\lambda_2(k)}}{k} +
    \mathcal{O}\left ( \lambda - \lambda_2 \right )
    \quad &
            \mbox{\rm as} \;\; \lambda \to \lambda_2(k), \\[3mm]  \displaystyle  
    \frac{\sqrt{\lambda_3(k) - \lambda}}{k} + \mathcal{O}\left (
    \lambda_3 - \lambda \right )
    \quad & \mbox{\rm as} \;\; \lambda \to \lambda_3(k) .
  \end{array}
\right.
\end{equation*}
The itemized list below summarizes the asymptotic results, where we
use the asymptotic equivalence for the leading-order terms and neglect
writing the remainder terms.

\begin{itemize}
\item The asymptotic values of the normalized phase shift $\Delta_d$ are 
  \begin{align*}
    \Delta_d = \left\{ 
    \begin{array}{ll} \displaystyle
      2\pi - \frac{2\pi}{K(k)}
      \sqrt{\frac{\lambda-\lambda_2(k)}{k^2(1-k^2)}}   \quad & \mbox{\rm as}
                                                         \;\; \lambda
                                                         \to
                                                         \lambda_2(k),
      \\  \displaystyle  
      \frac{2\pi}{K(k)} 
      \sqrt{\frac{\lambda_3(k)-\lambda}{k^2}} \quad & \mbox{\rm as} \;\;
                                               \lambda \to
                                               \lambda_3(k). 
    \end{array}
                                               \right.
  \end{align*}
  Since
  \begin{equation*}
      \partial_\lambda \Delta_d = \frac{2\pi}{K(k)}
                                  \partial_{\varphi_\alpha}
                                  F(\varphi_\alpha,k) \partial_\lambda 
                                  \varphi_\alpha 
  \end{equation*}
  with $\partial_{\varphi} F(\varphi,k) > 0$ and
  $\partial_{\lambda} \varphi_{\alpha} < 0$, the phase shift
  $\Delta_d$ monotonically decrease from $2\pi$ at
  $\lambda = \lambda_2(k)$ to $0$ at $\lambda = \lambda_3(k)$. This
  proves that the map $\lambda \mapsto \Delta_d(\lambda)$ is
  one-to-one and onto from
  $[\lambda_2(k),\lambda_3(k)]$ to $[0,2\pi]$.  \\

\item   The asymptotic values of the inverse width $\kappa_d$ are
\begin{align*}
\kappa_d = \left\{ 
\begin{array}{ll} \displaystyle
\left ( \frac{E(k)}{K(k)} -1+k^2 \right )
\sqrt{\frac{\lambda-\lambda_2(k)}{k^2(1-k^2)}}   \quad & \mbox{\rm as} \;\; \lambda \to \lambda_2(k), \\  \displaystyle
\left (
\frac{K(k)}{E(k)} - 1 \right )\frac{\sqrt{\lambda_3(k)-\lambda}}{k} \quad & \mbox{\rm as} \;\; \lambda \to \lambda_3(k).
\end{array}
\right.
\end{align*}
The inverse width $\kappa_d = Z(\varphi_\alpha,k)$
exhibits a maximum when \cite[eq.~141.25]{Byrd}
\begin{equation*}
  \sin \varphi_\alpha = \frac{1}{k}\sqrt{1-\frac{E(k)}{K(k)}} \quad \iff
  \quad \lambda = \lambda_{\rm max}(k) := \frac{E(k)}{K(k)} - k^2 .
\end{equation*}
The dark breather with this value of $\lambda$ can be interpreted as
the narrowest (strongest) modulation of the cnoidal wave.  Plotting
the behavior of $\Delta_{\rm max}(k) := \Delta_d$ at
$\lambda = \lambda_{\rm max}(k)$ as a function of $k$, we find that
$$0 < \Delta_{\rm max}(k) < \pi,$$ with the upper limit reached as
$k \to 0$.  The dotted curve in the left top panel of
Fig.~\ref{fig:dark-breather-dispersion-modes} shows the graph of
$$\left \{ (\Delta_{\rm max}(k),c_d(\lambda_{\rm max}(k))-c_0)\;\big
  | \; k \in (0,1) \right \}.$$ Consequently, the most localized dark
breather exhibits a positive phase shift. The phase shift is negative
for $\lambda$ near $\lambda_2(k)$ since $\Delta_d - 2\pi \in (-\pi,0)$
(cf.~Remark \ref{rem-dark}) and is positive for $\lambda$ near
$\lambda_3(k)$ since $\Delta_d \in (0,\pi)$.  This partitions dark
breathers into two branches: the slow (fast) branch for
$0 < \Delta_d < \Delta_{\rm max}(k) < \pi$
($\Delta_{\rm max}(k) < \Delta_d < 2\pi$).  The slow branch exhibits
dark breathers with strictly positive phase shifts whose amplitudes
increase with increasing phase shift.  On the fast branch, dark
breathers can have positive or negative phase shift depending on
whether $\Delta_d$ is less than or greater than $\pi$, respectively.
Also, an increase in phase shift corresponds to a decrease in
amplitude.  \\

\item The asymptotic values of the breather speed $c_d$ are
\begin{equation*}
c_d = \left\{ 
\begin{array}{ll} \displaystyle
c_0 - \frac{4k^2(1-k^2)}{E(k)/K(k) - 1 + k^2}
\quad & \mbox{\rm as} \;\; \lambda \to \lambda_2(k), \\
\displaystyle c_0 - \frac{4k^2}{1-E(k)/K(k)}
\quad & \mbox{\rm as} \;\; \lambda \to \lambda_3(k).
\end{array}
\right.
\end{equation*}
Based on the graphs in Fig.~\ref{dep dark}, we conjecture that the breather
velocity $c_d$ is a monotonically decreasing function of $\lambda$.
\end{itemize}

\subsection{Asymptotic limit $k \to 1$}
\label{sec:solit-limit-backgr}\hfill\\
\\
We show similarly to Section \ref{sec:harm-solit-limits} that 
the dark breather recovers the two-soliton
solution in the limit $k \to 1$. The only difference 
from the degeration of the bright breather is that 
the spectral parameter $\lambda$ is now defined in $(-1,0)$ 
rather than in $(-\infty,-1)$. By using \eqref{theta4-asymptotics} 
in (\ref{tau-function-dark}), we obtain the asymptotic approximation 
\begin{equation}
  \label{tau-2soli-dark}
  \tau(x,t) \sim \sqrt{\frac{-2k'\log{k'}}{\pi}} \Big[
  \cosh(\xi_1+\delta_d) e^{-\sqrt{|\lambda|}\xi_2} +
  \cosh(\xi_1-\delta_d) e^{\sqrt{|\lambda|}\xi_2} \Big], \quad \mbox{\rm as} \;\; k \to 1,
\end{equation}
where $\xi_1 = x -4t$ and $\xi_2 = x - 4 |\lambda| t + x_0$ for $\lambda \in (-1,0)$ and we have used $\kappa_d \to \sqrt{|\lambda|}$, 
$c_d \to 4 |\lambda|$, and the corresponding limiting phase $\delta_d$ found from 
\cite[eq.~(2.7)]{VandeVel}:
\begin{equation}
  \label{delta-2soli-dark}
  \delta_d := \lim_{k\to 1} F(\varphi_\alpha,k) = \frac{1}{2} \log
  \left ( \frac{1+\sqrt{|\lambda|}}{1-\sqrt{|\lambda|}} \right ),
  \quad \lambda \in (-1,0) .
\end{equation}
Inserting \eqref{tau-2soli-dark} and \eqref{delta-2soli-dark} into
\eqref{new-solution} results in the two-soliton solution
\begin{align*}
  u(x,t) =2 \frac{e^{2\delta_d}(1-\sqrt{|\lambda|})^2 +
    e^{-2\delta_d}(1+\sqrt{|\lambda|})^2 + 2\cosh ( 2\sqrt{|\lambda|} \xi_2
    ) + 2|\lambda| \cosh ( 2\xi_1)}{[e^{-\sqrt{|\lambda|}\xi_2} 
    \cosh(\xi_1+\delta_d) + e^{\sqrt{|\lambda|}\xi_2}
    \cosh(\xi_1-\delta_d) ]^2} ,
\end{align*}
that exhibits the asymptotic behavior
\begin{align*}
  u(x,t) \sim \ &2 \; \mathrm{sech}^2\left ( \xi_1 \pm \delta_d
                  \right ) + 2 |\lambda| \; \mathrm{sech}^2\left
                  (\sqrt{|\lambda|} \xi_2 \mp
                  \delta_d \right ), \quad t \to \pm \infty .
\end{align*}
After the interaction, the slower soliton of amplitude $2|\lambda|$
experiences the negative phase shift $-2 \delta_d/\sqrt{|\lambda|}$
whereas the faster soliton of amplitude $2$ exhibits the positive
phase shift $2 \delta_d$.
  
\subsection{Asymptotic limit $k \to 0$}
\label{sec:small-ampl-solit}\hfill\\
\\
We show that the dark breather as $k \to 0$ can be approximated by a dark
soliton solution of the nonlinear Schrodinger (NLS) equation.

In the limit $k \to 0$, the interval $[\lambda_2(k),\lambda_3(k)]$ shrinks to
the point $\lambda = 1$ and the solution $u(x,t)$  converges to the zero solution such that both the cnoidal wave and the dark breather vanish.  For small $k$,
it is well-known (see, e.g., \cite{Zakharov}) that the multiple scales expansion 
\begin{align}
u(x,t) &= 2 \mathrm{Re} \bigg[ \epsilon \sqrt{\frac{\ell}{6}}
             A(\zeta,\tau) e^{i(\ell x - \omega t)} \notag \\
       & \qquad\quad\,     + \epsilon^2 \frac{\ell}{6} \Big (
             \frac{1}{4} A(\zeta,\tau)^2 
             e^{2i(\ell x - \omega t)} - \frac{1}{2} |A(\zeta,\tau)|^2 \Big) +
             \mathcal{O}(\epsilon^3)  \bigg]
  \label{nls-expansion}
\end{align}
leads to the following NLS equation for the slowly varying amplitude
$A(\zeta,\tau)$:
\begin{equation}
  \label{nls}
  i A_\tau - \frac{1}{2} A_{\zeta\zeta} + |A|^2 A = 0 ,
\end{equation}
where $0 < \epsilon \ll 1$ is the amplitude parameter, $\ell > 0$ is the
carrier wavenumber, $\omega = - \ell^3$ is the KdV linear dispersion relation,
and $\zeta = \frac{\epsilon}{\sqrt{6 \ell}} (x + 3 \ell^2 t)$ and
$\tau = \epsilon^2 t$ are slow variables.  The NLS equation
\eqref{nls} admits the plane wave solution
\begin{equation}
  \label{nls-plane-wave}
  A(\zeta,\tau) = e^{i (1 + \frac{v^2}{2})\tau + i v \zeta + i \psi_0}
\end{equation}
for any $v,\psi_0 \in \R$.  To determine $\ell$ and $\epsilon$, it is
necessary to expand the cnoidal wave background of the dark breather
solution for small elliptic modulus $0 < k \ll 1$:
\begin{align*}
    u(x,t) &= 2k^2 {\rm cn}^2(x-c_0t) \\
&=  2k^2 \cos^2(x - c_0 t) + \mathcal{O}(k^4) \\
&= k^2 + k^2 \cos 2 (x - c_0 t) + \mathcal{O}(k^4) , 
\end{align*}
where $c_0 \to -4$ as $k \to 0$. The background cnoidal wave's
wavenumber $Q$, frequency $\Omega$, and mean value $\overline{\phi}$
expand as $k \to 0$ in the form:
\begin{equation*}
\begin{array}{l}
    Q := \frac{\pi}{K(k)} = 2 - \frac{k^2}{2} + \mathcal{O}(k^4), \\
        \Omega  := c_0 Q = -8 + 18 k^2 + \mathcal{O}(k^4), \\
    \overline{\phi} := \frac{1}{2K(k)} \int_0^{2K(k)} \phi_0(x)\,dx =
                      k^2 + \mathcal{O}(k^4).
                      \end{array}
\end{equation*}
Comparing \eqref{nls-expansion} with the asymptotic expansion 
for the background cnoidal wave, we find $\epsilon = \frac{k^2\sqrt{3}}{2}$ 
and $\ell = 2$ confirming that the limit $k \to 0$ coincides with the NLS
approximation.  Since the expansion \eqref{nls-expansion} does not
incorporate an $\mathcal{O}(\epsilon)$ mean term, the Galilean transformation
of the KdV equation can be used in \eqref{nls-expansion} and \eqref{nls-plane-wave} to obtain 
\begin{equation}
  \label{plane-wave-expansion}
    u(x,t) \to k^2 + u(x-6k^2t,t) = k^2 + k^2 \cos ( \Lambda x - \Upsilon t + \psi_0 ) + \mathcal{O}(k^4) , 
 \end{equation}
 where
\begin{equation*}
\begin{array}{l}
    \Lambda = 2+ \frac{v}{4} k^2 + \mathcal{O}(k^4), \\ 
    \Upsilon = -8 + (12-3v)k^2
              + \mathcal{O}(k^4).
              \end{array}
\end{equation*}
The choice $v = -2$ asymptotically matches $\Lambda$ and $\Upsilon$ in
\eqref{plane-wave-expansion} with $Q$ and $\Omega$.

The NLS equation \eqref{nls} admits two families of dark soliton
solutions \cite{Ablowitz}
\begin{equation}
  \label{nls-dark-soliton}
  A(\zeta,\tau) = \Big ( \cos{\beta} \pm i \sin \beta \tanh \big (
  \sin \beta (\zeta - c_\pm \tau) \big ) \Big ) e^{i\left (
      - 2 \zeta + 3\tau + \psi_0 \right )},
\end{equation}
where $\pm$ corresponds to the fast ($+$) and slow ($-$) solution
branches with velocities $c_\pm = 2 \pm \cos{\beta}$, phase shift
parameter $\beta \in [0,\pi/2]$, and arbitrary phase $\psi_0 \in \R$.
Since 
$$
A(\zeta,\tau) \to e^{i(-2\zeta + 3 \tau + \psi_0 \mp \beta)} \quad 
\mbox{\rm as} \;\; \zeta - c_\pm \tau \to -\infty
$$
and
$$
A(\zeta,\tau) \to e^{i(-2\zeta + 3 \tau + \psi_0 \pm \beta)} \quad 
\mbox{\rm as} \;\; \zeta - c_\pm \tau \to \infty, 
$$
the normalized phase shift is
$\Delta_{\pm} := \pm 2 \beta$ for the fast ($-$) and slow ($+$) branch
of solutions.  Applying the Galilean transformation
$u(x,t) \to k^2 + u(x-6k^2t,t)$ to Eqs.~\eqref{nls-expansion} and
\eqref{nls-dark-soliton}, the dark soliton velocity-phase shift
relation $c_\pm$ is
\begin{equation}
  \label{dark-soliton-speed}
  \begin{split}
    c_\pm = -12 + \bigg ( 12 + 3 \, \mathrm{sgn}(\Delta_\pm) \cos \Big (
    \frac{\Delta_\pm}{2} \Big ) \bigg )k^2, \quad \Delta_+ \in
    (0,\pi], \quad \Delta_- \in (-\pi,0) .
  \end{split}
\end{equation} 
From Eq.~\eqref{nls-dark-soliton}, the inverse width parameter
$\kappa_\pm := \sin \frac{\beta \epsilon}{\sqrt{12}}$ is given by 
\begin{equation}
  \label{dark-soliton-width}
  \kappa_\pm = \frac{1}{4} \sin \left ( \frac{|\Delta_\pm|}{2} \right )k^2 .
\end{equation}

In order to compare the dispersion relation given by \eqref{dark-soliton-speed} and \eqref{dark-soliton-width} with the dark breather dispersion relation 
given by 
\eqref{eq:25}, \eqref{eq:24}, \eqref{dark-breather-phase-shift}, we expand
the spectral parameter $\lambda$ as $\lambda = 1-k^2(1+\mu)$ with new scaled 
spectral parameter $\mu$,
ensuring a distinct breather for each $\mu \in (0,1)$ as $k \to 0$.
The small $k$ expansion of the dark breather dispersion relation
\eqref{eq:6}, \eqref{eq:25}, \eqref{eq:24}, and
\eqref{dark-breather-phase-shift} is given by 
\begin{equation*}
\begin{array}{l}
    \alpha_d = \arcsin(\sqrt{\mu}) + \mathcal{O}(k^2), \\
               \kappa_d = \frac{1}{2} \sqrt{\mu(1-\mu)} k^2 +
               \mathcal{O}(k^4), \\
    c_d = -12 + (9 + 6 \mu) k^2 + \mathcal{O}(k^4), \\
          \Delta_d = 4 \arcsin(\sqrt{\mu}) + \mathcal{O}(k^2),
          \end{array}
  \end{equation*}
for $\mu \in [0,1]$.  Substituting $\mu = \sin^2 \left (
  \frac{\Delta_d}{4} \right )$ yields
\begin{equation}
  \label{dark-breather-dispersion-small-k}
  \begin{array}{l}
  c_d = -12 + \left[ 12 - 3 \cos \Big (\frac{\Delta_d}{2} \Big )
  \right] k^2 + \mathcal{O}(k^4), \\ \kappa_d = \frac{1}{4}
  \sin \left ( \frac{|\Delta_d|}{2} \right )k^2 + \mathcal{O}(k^4).
  \end{array}
\end{equation}
By identifying certain values of the phase shift $\Delta_d$ with the
slow ($+$) and fast ($-$) branches of the NLS dark soliton solution
\eqref{nls-dark-soliton}, as given by 
\begin{align*}
  \Delta_- = \Delta_d, \quad & \Delta_d \in (0,\pi], \\
  \Delta_+ = \Delta_d - 2\pi, \quad & \Delta_d \in (-\pi,0),
\end{align*}
we find that Eq.~\eqref{dark-breather-dispersion-small-k} coincides
with Eqs.~\eqref{dark-soliton-speed} and \eqref{dark-soliton-width} up
to and including the $\mathcal{O}(k^2)$ terms.  The fast and slow
branches of the NLS dark soliton \eqref{nls-dark-soliton} coincide
with the limiting fast and slow branches of the dark breather.  The
black soliton solution \eqref{nls-dark-soliton} with $\beta = \pi/2$
corresponds to the dark breather of maximum localization in which
$\Delta_{\rm max}(k) \sim \pi$ as $k \to 0$.

\section{Conclusion}
\label{sec-8}

A comprehensive characterization of explicit solutions of the KdV
equation, representing the nonlinear superposition of a soliton and
cnoidal wave, has been obtained using the Darboux transformation.  These
solutions are breathers, manifesting as nonlinear wavepackets
propagating with constant velocity on a cnoidal, periodic, traveling
wave background, subject to a topological phase shift.  Breathers of
elevation type, called bright breathers, are shown to propagate faster
than the cnoidal background.  Depression-type breathers are called
dark breathers and they move slower than the cnoidal background. A key finding is that each breather on a fixed cnoidal wave background is
uniquely determined by two distinct parameters: its initial position
and a spectral parameter. We prove that the spectral parameter is in
one-to-one correspondence with the normalized phase shift, which it imparts to the cnoidal background, in the
interval $(-\pi,\pi].$

Bright breathers with small, negative phase shifts correspond to
small-scale amplitude modulations of the cnoidal wave background,
which result in the cnoidal wave dominating the solution. Small,
positive phase shifts correspond to bright breathers with large-scale
amplitude modulations of the cnoidal wave background where the soliton
component is dominant.  As the phase shift is swept across the
interval $(-\pi,\pi]$, all breather amplitudes are attained.

In contrast, dark breather amplitudes, being of depression type, are
limited.  Small phase shifts, positive or negative, correspond to
small modulations of the cnoidal wave background and the slow or fast
branch of solutions, respectively.  For each cnoidal wave background,
we find a narrowest dark breather that imparts a positive phase
shift. When the amplitude of the cnoidal wave background is small,
dark breathers degenerate into dark soliton solutions of the NLS
equation (\ref{nls}) derived from the KdV equation (\ref{kdv}).

When the period of the cnoidal wave background goes to infinity, both
bright and dark breather solutions are shown to degenerate into
two-soliton solutions of the KdV equation.  In this sense, breathers
can be viewed as a generalization of two-soliton interactions.  While
such an interpretation is well-known for the sine-Gordon, focusing
NLS, and the focusing modified KdV equations where breathers can be
interpreted as bound states of two solitons \cite{AblowitzSegur},
those breather solutions are localized.  In contrast, the topological
KdV breathers with an extended, periodic background described here
represent a different class of nonlinear wave interaction solutions.
We expect that such solutions exist for other integrable nonlinear
evolutionary equations with a self-adjoint scattering problem such as
the defocusing NLS and defocusing modified KdV equations.

An important application of these breather solutions is to the problem
of soliton-dispersive shock wave (DSW) interaction \cite{Maiden}.
Bright breathers were identified in \cite{Cole} as being associated
with soliton-DSW transmission. Soliton-DSW trapping corresponds to
dark breathers embedding within the DSW.  The spectral
characterization of KdV breathers obtained here can be used in the
context of multi-phase Whitham modulation theory \cite{Flaschka} to
describe the dynamics of breathers subject to large-scale amplitude
modulations \cite{Cole}.  In addition to soliton-DSW interaction, the
bright breathers resemble the propagation of a soliton through a
special kind of deterministic soliton gas, constructed using
Riemann-Hilbert methods from primitive potentials of the defocusing
modified KdV equation \cite{Girotti}.  Similar deterministic soliton
gases have been identified as soliton condensates for the KdV equation
\cite{CongyEl} and provide further applications for the breathers
constructed here.

\vspace{0.25cm}

{\bf Acknowledgement.}  The authors would like to thank the Isaac
Newton Institute for Mathematical Sciences for support and hospitality
during the programme \textit{Dispersive Hydrodynamics} when work on
this paper was undertaken (EPSRC Grant Number EP/R014604/1). The
authors thank Y. Kodama and G. El for many useful suggestions on this
project.  MAH gratefully acknowledges support from NSF DMS-1816934.

\bgroup
\def\arraystretch{2}
\begin{table}[h]
	\centering
	\begin{tabular}{|c|l|}
		\hline
		$F(\varphi,k)$  & Elliptic integral of the first kind 
		$\displaystyle F(\varphi,k) := \int_0^{\varphi} \frac{d \alpha}{\sqrt{1-k^2 \sin^2 \alpha}}$ \\
		\hline
		$E(\varphi,k)$  & Elliptic integral of the second kind 
		$\displaystyle E(\varphi,k) := \int_0^{\varphi} \sqrt{1-k^2 \sin^2 \alpha} d\alpha$\\
		\hline
		$K(k)$ & Complete elliptic integral
		$\displaystyle K(k) := F\left(\frac{\pi}{2},k\right)$ \\
		\hline
		$E(k)$ & Complete elliptic integral
		$\displaystyle E(k) := E\left(\frac{\pi}{2},k\right)$\\
		\hline
		$Z(\varphi,k)$ & Zeta function 
		$\displaystyle Z(\varphi,k) := E(\varphi,k) - \frac{E(k)}{K(k)} F(\varphi,k)$\\
		\hline
		${\rm sn}(x,k)$  & Jacobi elliptic function  ${\rm sn}(x,k) = \sin \varphi $\\
		\hline
		${\rm cn}(x,k)$  & Jacobi elliptic function  ${\rm cn}(x,k) = \cos \varphi $\\
		\hline
		${\rm dn}(x,k)$  & Jacobi elliptic function ${\rm dn}(x,k) = \sqrt{1 - k^2 \sin^2 \varphi}$\\
		\hline
		$H(x)$  & $\theta_1\left(\frac{\pi x}{2 K(k)}\right)$ with 
		$\displaystyle \theta_1(u) = 2 \sum\limits_{n=1}^{\infty} (-1)^{n-1} q^{(n-\frac{1}{2})^2} \sin(2n-1) u$ \\
		\hline
		$H_1(x)$  & $\theta_2\left(\frac{\pi x}{2 K(k)}\right)$ with 
		$\displaystyle \theta_2(u) = 2 \sum\limits_{n=1}^{\infty} q^{(n-\frac{1}{2})^2} \cos(2n-1) u$  \\
		\hline
		$\Theta_1(x)$  & $\theta_3\left(\frac{\pi x}{2 K(k)}\right)$ with 
		$\displaystyle \theta_3(u) = 1 + 2 \sum\limits_{n=1}^{\infty}  q^{n^2} \cos 2nu$  \\
		\hline
		$\Theta(x)$  & $\theta_4\left(\frac{\pi x}{2 K(k)}\right)$ with 
		$\displaystyle \theta_4(u) = 1 + 2 \sum\limits_{n=1}^{\infty} (-1)^{n} q^{n^2} \cos 2n u$  \\
		\hline
		$q$ & $\displaystyle e^{-\frac{\pi K'(k)}{K(k)}}$ with $K'(k) = K(k')$ and $k' = \sqrt{1-k^2}$ \\
		\hline
	\end{tabular}
	\vspace{0.15cm} 
	\caption{Table of elliptic integrals and Jacobi elliptic functions.}
	\label{Table-1}
\end{table}
\egroup

\end{document}